\newcommand{\M}{\boldsymbol{M}}
\def\Hb{\mathbf{H}}
\def\E{\mathcal{E}}
\def\w{\mathbf{w}}
\def\wb{w_{\infty}}
\def\R{\mathbb{R}}
\def\Z{\mathcal{Z}}
\def\D{\mathcal{D}}
\def\Rb{\mathbf{R}}
\renewcommand{\pmb}{\boldsymbol}
\newcommand{\mbf}{\mathbf}
\newcommand{\krfw}{k^r_{\textrm{fw}}}
\newcommand{\krbw}{k^r_{\textrm{bw}}}
\newtheorem{theorem}{Theorem}[section]
\newtheorem{proposition}[theorem]{Proposition}
\newtheorem{corollary}[theorem]{Corollary}
\newtheorem{lemma}[theorem]{Lemma}
\newtheorem{remark}[theorem]{Remark}
\newtheorem{definition}[theorem]{Definition}
\title[Generalized potential games]{Generalized potential games}
\author{M. H. Duong}
\address{School of Mathematics, University of Birmingham, B15 2TT, UK.}
\email{H.Duong@bham.ac.uk}
\author{T. H. Dang-Ha}\thanks{Dang contributed to this work when he was at the University of Oslo}
\address{Google, Z\"{u}rich, Switzerland}
\email{danghien@google.com}
\author{Q. B. Tang}
\address{Institute for Mathematics and Scientific Computing, University of Graz, Austria}
\email{quoc.tang@uni-graz.at, baotangquoc@gmail.com}
\author{H. M. Tran}
\address{Data Analytics Department, Esmart Systems, 1783 Halden, Norway.}
\email{hoang.minh.tran@esmartsystems.com, hoangtm.fami@gmail.com}
\begin{document}
\maketitle
\begin{abstract}
In this paper, we introduce a notion of generalized potential games that is inspired by a newly developed theory on generalized gradient flows. More precisely, a game is called generalized potential if the simultaneous gradient of the loss functions is a nonlinear function of the gradient of a potential function. Applications include a class of games arising from chemical reaction networks with detailed balance condition. For this class of games, we prove an explicit exponential convergence to equilibrium for evolution of a single reversible reaction. Moreover, numerical investigations are performed to calculate the equilibrium state of some reversible chemical reactions which give rise to generalized potential games.
\end{abstract}
\tableofcontents
\section{Introduction}
In the seminal work \cite{monderer1996potential} Monderer and Shapley introduced a fundamental concept of (multi-player) potential games\footnote{In this paper, potential games covers both exact and weighted potential games as defined in \cite{monderer1996potential}, cf. Section \ref{sec: potential games}.}. They are a class of games where the incentive of all players to change their strategy can be expressed via a single common function, the potential function. From the mathematical point of view, a game is potential if the simultaneous gradient of the loss functions are equal to the gradient of the potential function (with respect to the corresponding controllable variable) scaled by weight factors. Using the potential function, the existence of pure-strategy Nash equilibria and the convergence to these equilibria have been shown \cite{monderer1996potential,Rosenthal1973}. Furthermore, the computation of a Nash equilibrium is reduced to the solution of an optimization problem: finding a pure-strategy Nash equilibrium is equivalent to finding a local minimum of the potential function.  In practice, Nash equilibria are often computed using the standard gradient descent method. Because of these distinctive features, potential games have been studied intensively in theoretical research as well as have been found in a cornucopia of practical applications in economics \cite{monderer1996potential,Rosenthal1973, AV2001, OS2005, MW2008} and other disciplines such as artificial intelligence and computer vision \cite{Miller1991, Yu1995}, theoretical
computer science, computational social science and sociology \cite{Montanari2009, Montanari2010} and wireless networks \cite{Yamamoto2015}. More recently, methods and concepts from game theory have been increasingly used in machine learning. For example, the celebrated Generative Adversarial Networks can be formulated as a two-player zero-sum game between two neural networks (the discriminator and the generator) \cite{Goodfellow-et-al2014} and learning with bandit feedback can be cast in the framework of potential games \cite{HeliouCohenMetikoulos2017}. We refer the reader to recent articles \cite{Schuurmans2016,Letcher2019,Zhou2019} and references therein for further information .

%For example, in Generative Adversarial Networks (GANs) the problem of estimating probability distribution from data can be formulated as an  adversarial game involving two players  in which one player tries to minimize while the other tries to maximize the difference between the learning distribution and the actual one \cite{Goodfellow-et-al2014}. At an equilibrium point, it is impossible to differentiate between the estimated distribution and the actual one.}

\medskip
The ultimate goal of this paper is to introduce a generalization of potential games inspired by recent developments in the theory of generalized gradient flows in the field of partial differential equations. In a classical gradient system, the evolution decreases a driving (entropy) functional ``as fast as possible'' given by a (quadratic) dissipation potential expressing a linear relation between rates and driving forces \cite{Onsager1931a, Onsager1931b, Morrison1986, Ottinger2005}. Solutions to a gradient flow system can also be iteratively approximated using the steepest gradient descent method \cite{AGS2008}. Recently, a concept of generalized gradient flows has been introduced where the dissipation potential is not necessarily quadratic \cite{Mielke2011, Mielke2014}. This generalization significantly enlarges the class of classical gradient systems and finds new applications in different areas, such as in the modeling of materials (for example, for plasticity and ferromagnetism) and in chemical reaction networks. Furthermore, generalized gradient flows arise in two natural ways: via large-deviation principles from the hydrodynamic limit of microscopic many-particle systems \cite{DuongPhD2014,Mielke2014} or as suitable multiscale limits (evolutionary $\Gamma$-convergence) of classical gradient systems \cite{Mielke2016a,LMPR2017}. The existence of the potential function as well as the linear relation between the simultaneous gradient of the loss functions and the gradient of the potential function in potential games resemble the existence of the driving functional and the linear relation between rates and driving forces. Motivated by this newly developed theory and the similarities between potential games and classical gradient flows, we introduce a notion of \textit{generalized potential games} where the simultaneous gradient of loss functions is a nonlinear function of the gradient of a potential function. This notion enlarges significantly the class of games which can be studied using tools from (generalized) gradient flows. Moreover, by providing an alternative criterion of potential games via the symmetrizability of the Hessian matrix, we give a necessary and sufficient condition for a game to be generalized potential. 

To show the applicability of our concept, we consider a class of loss functions arising from chemical reaction network theory under detailed balance condition. This theory has been employed in many places, from classical areas such as chemistry \cite{feinbergbook} to evolutionary game theory \cite{Veloz2014, Velegol2018}, social sciences~ \cite{DittrichFenizio2007,DittrichWinter2008} and recently in machine learning~\cite{KayalaBaldi2011,SimmReiher2018,Fooshee-etal2018}. Here we bring it to classical game theory.

%The idea of using chemical reactions to model games was start around [\textcolor{red}{when?}] and has been extensively studied since, as it finds applications in many places such as such as biology or chemistry~\cite{Veloz2014, Velegol2018}, social sciences~ \cite{DittrichFenizio2007,DittrichWinter2008} and recently in machine learning~\cite{KayalaBaldi2011,SimmReiher2018,Fooshee-etal2018}.

%, from classical areas such as biology or chemistry \cite{Veloz2014, Velegol2018}, social sciences \cite{?,?,?}, to modern areas such as machine learning \cite{?.?,?} [\textcolor{red}{Please provide some references!}] modeling political systems \cite{DittrichFenizio2007,DittrichWinter2008}. 
We show in this paper that if a chemical reaction system satisfies the detailed balance condition, then the game arising from this system is generalized potential. This enables us to use the projected gradient steepest descent method to calculate the chemical equilibrium. This is closely related to the trend to equilibrium of the evolution for the chemical reactions. We thus prove an explicit convergence to equilibrium for the case of a single reversible reaction with an arbitrary number of chemical species by the Bakry-Emery strategy. Numerical simulations show the similarity of the evolution and the convergence to stationary point of the potential function in the projected steepest gradient descent method. Therefore, we expect that a discrete version of our proof based on Bakry-Emery strategy will shed light on the convergence of numerical methods for generalized potential games.
%\vskip 1cm
%In addition, we also provide an alternative criterion of potential games via the symmetrizability of the Hessian matrix. We expect that our generalization and the new criterion will considerably broaden the class of potential games and its applications. To show such an application, we prove that, under a detailed balance condition, a generalized potential game corresponding to a class of loss functions gives rise to a general reversible chemical reaction. We also demonstrate that the symmetrizability condition is closely related to the detailed balance condition. By viewing chemical reactions as generalized potential games we calculate their equilibrium points using the projected steepest gradient descent method. We prove an explicit exponential convergence to equilibrium for a chemical reaction network consisting of a single reversible reaction. We expect that our connection between the theory of chemical reactions and game theory would open a new avenue of research with applications in both ways. We note that viewing chemical reactions as multiplayer games does have precedents in the literature \cite{Veloz2014, Velegol2018}. However, these papers are limited to some specific reactions while our approach is applicable to a general chemical reactions with detail balanced. \textcolor{blue}{BAO: More discussion with specific examples from social sciences, etc.} 

\medskip
To sum up, the main contributions of our paper are the following:
\begin{enumerate}
\item we introduce a notion of generalized potential games enlarging significantly the class of classical potential games;
\item we provide a connection between potential games and symmetrizable matrices, and use this connection to characterize generalized potential games;
\item we calculate numerically the equilibrium sate of several reversible reaction systems, and compare them with the evolution of the system.
%\item we prove exponential convergence to equilibrium for a single reversible reaction obtaining an explicit rate.
\end{enumerate}

\textbf{The rest of the paper is organized as follows}. In the next section, we recall the definition of potential games and provide a connection to symmetrizable matrices. The notion of generalized potential games is introduced in Section \ref{subsec:concept} where we also provide a necessary and sufficient condition for a game to be generalized potential. Using this condition, we show in Section \ref{sec: chemical reaction games} that a game arising from a chemical reaction system is generalized potential if the system satisfies the detailed balance condition. An explicit exponential convergence to equilibrium for the evolution of a single reversible reaction is also given. In Section \ref{sec:numerics}, several games from specific chemical reactions are studied numerically. Finally, Section \ref{sec: summary} is devoted to conclusion and outlook.
\section{Potential games and symmetrizable matrices}
\label{sec: potential games}
%\subsection{Potential games}
In this section, we recall the concept of potential games and relevant results, which have attracted a lot of attention recently in the machine learning community \cite{balduzzi2018mechanics,Letcher2019}. Moreover, we establish a connection between potential games and symmetrizable matrices.
\begin{definition}\label{def_game}
A $n$-player differentiable game is a set of players $[n]=\{1,\ldots,n\}$ and twice continuously differentiable loss functions\footnote{also called payoff/utility functions in the literature. We use the terminology loss function since we consider the minimization problem.}
\begin{align*}
\ell_i: D_1\times \ldots\times D_n &\longrightarrow\R
\\ \w=(\w_1,\ldots,\w_n)&\mapsto \ell_i(\w),
\end{align*}
where $\w_i\in D_i\subset \R^{d_i}$ with $\sum_{i=1}^n d_i=d$. The objective of each player is to minimize its loss function, that is to seek
$$
\w_i^\ast=\mathrm{argmin}_{\w_i\in D_i} \ell_i(\w).
$$
The simultaneous gradient is the combination of gradients of the losses with respect to the controllable parameters of the corresponding players:
\begin{equation}
\label{eq: simultaneous gradient}
\xi(\w)=(\nabla_{\w_1}\ell_1,\ldots,\nabla_{\w_n}\ell_n).
\end{equation}
A point $\w^\ast\in \R^d$ is called a fixed point (Nash stationary point) of the game if
$$
\xi(\w^\ast)=(\nabla_{\w_1}\ell_1(\w^\ast),\ldots,\nabla_{\w_n}\ell_n (\w^\ast))=0.
$$
\end{definition}
%\textcolor{red}{Nash equilibrium point and its connection to detailed balance equilibrium later???}
An important class of differentiable games are potential games introduced in the seminal paper \cite{monderer1996potential}.
\begin{definition}
\label{def: PG}
A game is called a potential game if and only if there exist a function $\phi:\R^d\to\R$ and positive numbers $\{\alpha_i\}_{i=1}^n$ such that
\begin{equation}
\label{eq: potential game}
\alpha_i\nabla_{\w_i}\ell_i=\nabla_{\w_i}\phi\qquad\forall i=1,\ldots, n.
\end{equation}
\end{definition}
Note that in \cite{monderer1996potential} games satisfying condition \eqref{eq: potential game} are called weighted potential games and in particular exact potential games when $\alpha_i=1$ for all $i$. An advantage of potential games is that we only need to work with one single potential function $\phi$ instead of $n$ different loss functions $\{\ell_i\}_{i=1}^n$. The following theorem provides a necessary and sufficient condition to determine a potential game and a useful formula to compute the potential function.
\begin{theorem}\cite[Theorem 4.5]{monderer1996potential}
\label{thm: MondererShapley1996}
A differentiable game is potential if and only if there exist positive numbers (weights) $\{\alpha_i\}_{i=1}^n$ such that
\begin{equation}
\label{eq: potential condition}
\alpha_i\nabla^2_{\w_i\w_j}\ell_i=\alpha_j\nabla^2_{\w_i\w_j}\ell_j\qquad\forall\, i,j.
\end{equation}
Under this condition, there exists a potential $\phi$ that satisfies
\begin{equation}
\label{eq: potential phi}
\phi(z_1)-\phi(z_0)=\sum_{i=1}^n \alpha_i\int_0^1 z_i'(t)\cdot\nabla_{\w_i}\ell_i(z(t)))\,dt \qquad\forall\, z_0,z_1\in\Z,
\end{equation}
where $z:[0,1]\rightarrow \Z$ is a continuously differentiable path in $\Z$ that connects two strategy profiles $z_0$ and $z_1$, that is $z(0)=z_0$ and $z(1)=z_1$.

Furthermore, the potential function is uniquely defined up to an additive constant.
\end{theorem}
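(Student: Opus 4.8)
The plan is to recognize condition \eqref{eq: potential game} as the assertion that a certain (weighted simultaneous gradient) vector field is conservative, and condition \eqref{eq: potential condition} as the assertion that its Jacobian is symmetric; the equivalence then reduces to the classical fact that a $C^1$ vector field on a simply connected domain is a gradient if and only if its Jacobian is symmetric. Concretely, I would introduce the field and the associated differential $1$-form
$$
X(\w) = \bigl(\alpha_1\nabla_{\w_1}\ell_1(\w),\ldots,\alpha_n\nabla_{\w_n}\ell_n(\w)\bigr),\qquad \omega=\sum_{i=1}^n \alpha_i\,\nabla_{\w_i}\ell_i\cdot d\w_i,
$$
and observe that the game is potential precisely when $X=\nabla\phi$, i.e. when $\omega=d\phi$ is exact.

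For the direction ``potential $\Rightarrow$ \eqref{eq: potential condition}'', I would simply differentiate \eqref{eq: potential game}: from $\alpha_i\nabla_{\w_i}\ell_i=\nabla_{\w_i}\phi$, applying $\nabla_{\w_j}$ gives $\alpha_i\nabla^2_{\w_i\w_j}\ell_i=\nabla^2_{\w_i\w_j}\phi$. Since $\phi$ inherits $C^2$ regularity from the $\ell_i$, Schwarz's theorem yields $\nabla^2_{\w_i\w_j}\phi=(\nabla^2_{\w_j\w_i}\phi)^{T}$, and swapping the roles of $i$ and $j$ in the identity above produces $\alpha_i\nabla^2_{\w_i\w_j}\ell_i=\alpha_j\nabla^2_{\w_i\w_j}\ell_j$, which is exactly \eqref{eq: potential condition}.

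For the converse I would read \eqref{eq: potential condition} as the symmetry of the Jacobian $DX$ (equivalently $d\omega=0$): the $(i,j)$-block of $DX$ is $\alpha_i\nabla^2_{\w_i\w_j}\ell_i$, and using $\nabla^2_{\w_i\w_j}\ell_j=(\nabla^2_{\w_j\w_i}\ell_j)^{T}$ one checks that \eqref{eq: potential condition} says precisely that the $(i,j)$-block equals the transpose of the $(j,i)$-block. Assuming the strategy space $\Z=D_1\times\cdots\times D_n$ is simply connected (it is convex in the applications considered here), the Poincar\'e lemma then guarantees that the closed form $\omega$ is exact, so $\omega=d\phi$ for some $\phi$. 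The potential is recovered by integrating $\omega$ along any $C^1$ path $z$ joining $z_0$ to $z_1$, which gives exactly formula \eqref{eq: potential phi}; path-independence of this integral follows from closedness together with simple connectedness. Uniqueness up to an additive constant then follows because any two such potentials have the same gradient $X$ on the connected set $\Z$.

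The only genuine subtlety, and the step I would treat most carefully, is the passage from ``closed'' to ``exact'' in the converse: this is where the topology of $\Z$ enters, and where the line-integral formula \eqref{eq: potential phi} becomes well defined independently of the chosen path $z$. Everything else is a direct differentiation of \eqref{eq: potential game} and bookkeeping with the symmetry of mixed second derivatives.
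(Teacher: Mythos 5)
Your proposal is correct; note, however, that the paper itself gives no proof of this statement --- it is quoted directly from Monderer and Shapley \cite[Theorem 4.5]{monderer1996potential}, and your argument (read condition \eqref{eq: potential condition} as closedness of the $1$-form $\omega=\sum_i\alpha_i\nabla_{\w_i}\ell_i\cdot d\w_i$, then invoke the Poincar\'e lemma and recover $\phi$ by the line integral \eqref{eq: potential phi}) is essentially the classical proof used in that source. The one point worth making explicit is the hypothesis you already flag: the passage from closed to exact requires $\Z=D_1\times\cdots\times D_n$ to be simply connected (Monderer--Shapley take the $D_i$ to be intervals, and the chemical-reaction applications here use convex orthants), since for a general product of open sets $D_i\subset\R^{d_i}$ the equivalence could fail; with that standing assumption your proof is complete, including the uniqueness of $\phi$ up to a constant on the connected domain.
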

 Using the potential function, the existence and attainability of a pure-strategy Nash equilibrium have been proved \cite{monderer1996potential, Rosenthal1973}. Furthermore, it can be numerically computed using the steepest gradient descent method under a broad range of conditions \cite{Lee2016}. There is a vast literature about potential games, and we refer the reader to monographs \cite{Nisan2007,La2016} for great expositions on this topic. Identifying a potential game by showing the existence of the weights in \eqref{eq: potential condition} is a non-trivial task. It would be more useful to seek a condition that can be verified intrinsically using only the given loss functions. In the next section, we discuss this issue.
%\subsection{Connection to symmetrizable matrices}
%\label{sec: wpg and sm}

\medskip
Next, we establish a connection between potential games and symmetrizable matrices. In order to determine whether a game is potential using Definition \ref{def: PG}, one needs to show the existence of the weighted $(\alpha_i)_{i=1,\ldots, n}$ which is usually difficult. One technique to overcome this issue is to explore the connection between the Hessian of the game (see \eqref{eq: Hessian}) and the so-called symmetrizable matrices, which provides an algorithm to check if a game is potential.

%In this section, we study a connection between potential games and symmetrizable matrices. The connection will provide us an algorithm to determine whether a game is potential using only the loss functions.
%\subsubsection{Symmetrizable matrices}
\begin{definition}
\label{def: symmetrizable}
An $n$-by-$n$ matrix $A$ is said to be symmetrizable if there exists an invertible diagonal matrix $D$ such that $DA$ is symmetric. In this case the matrix $D$ is called a symmetrizer of $A$.
\end{definition}
An example of symmetrizable matrix is $A=\begin{pmatrix}
2& 1\\
\frac{1}{2}&-2
\end{pmatrix}$. This matrix is obviously not symmetric but is symmetrizable because  
$$
\begin{pmatrix}
1&0\\
0&2
\end{pmatrix}A=
\begin{pmatrix}
1&0\\
0&2
\end{pmatrix}\begin{pmatrix}
2& 1\\
\frac{1}{2}&-2
\end{pmatrix}=\begin{pmatrix}
2&1\\
1&-4
\end{pmatrix}.
$$
%\begin{example}[Examples of symmetrizable matrix]
%\begin{enumerate}\
%\item $A=\begin{pmatrix}
%2& 1\\
%\frac{1}{2}&-2
%\end{pmatrix}$
%$A$ is obviously not symmetric but is symmetrizable because  
%$$
%\begin{pmatrix}
%1&0\\
%0&2
%\end{pmatrix}A=
%\begin{pmatrix}
%1&0\\
%0&2
%\end{pmatrix}\begin{pmatrix}
%2& 1\\
%\frac{1}{2}&-2
%\end{pmatrix}=\begin{pmatrix}
%2&1\\
%1&-4
%\end{pmatrix}.
%$$
%\item (Example 3.1 in symmetrizable 5 paper) $A=\mathrm{tridiag}(\beta_i,\alpha_i,\gamma_i)\in \R^{n\times n}$ with $\beta_i\gamma_i>0$is symmetrizable with the symmetrizer
%$$
%D=\mathrm{diag}\Big(1,\frac{\gamma_1}{\beta_1},\frac{\gamma_1\gamma_2}{\beta_1\beta_2},\ldots,\frac{\gamma_1\ldots\gamma_{n-1}}{\beta_1\ldots \beta_{n-1}}\Big).
%$$
%Indeed, $DA$ is symmetric since \textcolor{red}{check}
%$$(DA)_{ij}=D_{ii}A_{ij}=\begin{cases}
%\frac{\gamma_1\ldots \gamma_{i-1}}{\beta_1\ldots\beta_{i-1}}\alpha_i,\quad\text{if}\quad i=j,\\
%\frac{\gamma_1\ldots \gamma_{i-1}}{\beta_1\ldots\beta_{i-1}}\beta_i,\quad\text{if}\quad i=j+1,\\
%\frac{\gamma_1\ldots \gamma_{i-1}}{\beta_1\ldots\beta_{i-1}}\gamma_i,\quad\text{if}\quad i=j-1,\\
%0,\quad\quad\quad\text{otherwise}.
%\end{cases}
%$$
%\end{enumerate}
%\end{example}
%The following lemma provides different equivalent definition of a symmetrizable matrix
%\begin{lemma}
%Need to add lemma content here
%\end{lemma}
For more information on symmetrizable matrices we refer the reader to \cite{BAKSALARY1981, VENKAIAH1988}. The construction/existence of a symmetrizer $D$ in Definition \ref{def: symmetrizable} is often nontrivial. It is more useful to seek a necessary and sufficient condition to verify the symmetrizability of a matrix without invoking externally the existence of the weights. The following lemma provides such a condition.
\begin{lemma}\cite{Hearon1953}
\label{lem: symmetrizable matrix}
A matrix $A=(a_{ij})$ is symmetrizable if and only if the following conditions are satisfied
\begin{enumerate}
\item $a_{ij}=0$ implies $a_{ji}=0$ for all $1\leq i\leq j\leq n$.
\item $a_{i_1i_2}a_{i_2i_3}\ldots a_{i_ki_1}=a_{i_2i_1}a_{i_3i_2}\ldots a_{i_1i_k}$ for all $k\geq 3$ and $i_1,i_2,\ldots,i_k\in\{1,2,\ldots,n\}$.
\end{enumerate}
\end{lemma}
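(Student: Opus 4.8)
The plan is to treat the two directions separately after rewriting the hypothesis in componentwise form: $A$ is symmetrizable precisely when there is a diagonal $D=\mathrm{diag}(d_1,\dots,d_n)$ with every $d_i\neq 0$ such that the symmetry of $DA$ reads $d_i a_{ij}=d_j a_{ji}$ for all $i,j$. The necessity of conditions (1) and (2) is then routine. Assuming such a $D$ exists, if $a_{ij}=0$ then $d_j a_{ji}=d_i a_{ij}=0$, and since $d_j\neq 0$ we obtain $a_{ji}=0$, which is (1). For (2), I would write the chain of identities $d_{i_t}a_{i_t i_{t+1}}=d_{i_{t+1}}a_{i_{t+1}i_t}$ along the cycle $i_1\to i_2\to\cdots\to i_k\to i_1$, multiply them together, and cancel the common nonzero factor $d_{i_1}d_{i_2}\cdots d_{i_k}$ that appears on both sides.

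The substance is the converse. First I would encode the zero pattern of $A$ as an undirected graph $G$ on $\{1,\dots,n\}$, joining $i$ and $j$ by an edge whenever $a_{ij}\neq 0$; condition (1), which says the zero pattern is symmetric, makes this well defined and guarantees that $a_{ij}$ and $a_{ji}$ are simultaneously nonzero along every edge. On each connected component of $G$ I would fix a spanning tree, choose a root, set its $d$-value to $1$, and propagate outward by the rule $d_j=d_i\,a_{ij}/a_{ji}$ across each tree edge $\{i,j\}$. This is well defined and produces nonzero values throughout (the denominators $a_{ji}$ are nonzero on edges), so the resulting $D$ is invertible; by construction the symmetry relation $d_i a_{ij}=d_j a_{ji}$ holds on every tree edge, and it holds trivially for pairs with $a_{ij}=a_{ji}=0$.

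The main obstacle, and the only place where hypothesis (2) is used, is verifying the symmetry relation on the \emph{non-tree} edges. Given such an edge $\{p,q\}$, the unique path $p=v_0,v_1,\dots,v_m=q$ in the spanning tree, closed by the edge $\{q,p\}$, is a simple cycle of length $m+1\ge 3$, so condition (2) applies to the index tuple $(v_0,\dots,v_m)$. Telescoping the tree relations gives $d_q/d_p=\prod_{t=0}^{m-1} a_{v_t v_{t+1}}/a_{v_{t+1}v_t}$, while condition (2) rearranges to $\prod_{t=0}^{m-1} a_{v_t v_{t+1}}/a_{v_{t+1}v_t}=a_{pq}/a_{qp}$; comparing the two yields $d_p a_{pq}=d_q a_{qp}$, as required. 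I expect the bookkeeping of cycle orientations, together with the observation that every fundamental cycle closed by a non-tree edge has length at least $3$ (so that (2) is genuinely applicable), to be the delicate points, whereas the spanning-tree propagation and the necessity direction are straightforward.
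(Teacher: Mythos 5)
Your proof is correct and complete: the necessity direction via $d_i a_{ij}=d_j a_{ji}$ and the sufficiency direction via spanning-tree propagation closed up on non-tree edges by the cycle condition (2) is exactly the standard argument for this classical result. The paper itself offers no proof to compare against --- it simply cites Hearon (1953) --- and your write-up would serve as a valid self-contained substitute; the only points worth making explicit are the ones you already flag, namely that a fundamental cycle closed by a non-tree edge automatically has length at least $3$ (so (2) applies), and that the relation $d_i a_{ij}=d_j a_{ji}$ holds on each tree edge independently of its orientation relative to the root, so the telescoping along the tree path is legitimate.
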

Thanks to Lemma \ref{lem: symmetrizable matrix}, an algorithm can be provided to determine whether a matrix is symmetrizable or not. Moreover, one can also compute a corresponding symmetrizer, see e.g. \cite[Algorithm 1]{Dias2016}.
%The two conditions in above lemma can be easily checked. Based on Lemma \ref{lem: symmetrizable matrix}, \cite[Algorithm 1]{Dias2016} provides an algorithm to determine whether a matrix is symmetrizable or not and to compute a symmmetrizer when it is symmetrizable.
%\begin{algorithm}Algorithm 4.1 in symmetrizable matrix 5 and Algorithm 1 in Symmetrizable matrix
%\label{alg: symmetrizable}
%\end{algorithm}
%\subsubsection{Connections between potential games and symmetrizable matrices}

\medskip
We now seek the connection between potential games and symmetrizable matrices. Consider an $n$-player differentiable game in Definition \ref{def_game}. We define the Hessian of the game to be
\begin{equation}
\label{eq: Hessian}
\Hb(\w):=\begin{pmatrix}
\nabla^2_{\w_1\w_1}\ell_1&\nabla^2_{\w_1\w_2}\ell_1&\ldots&\nabla^2_{\w_1\w_n}\ell_1\\
\nabla^2_{\w_2\w_1}\ell_2&\nabla^2_{\w_2\w_2}\ell_2&\ldots&\nabla^2_{\w_2\w_n}\ell_2\\
\vdots&\vdots&\ddots&\vdots\\
\nabla^2_{\w_n\w_1}\ell_n&\nabla^2_{\w_n\w_2}\ell_n&\ldots&\nabla^2_{\w_n\w_n}\ell_n\\
\end{pmatrix}.
\end{equation}
Condition \eqref{eq: potential condition} is then equivalent to the requirement that the matrix $M\Hb(\w)$ is symmetric where $M = \mathrm{diag}(\alpha_1,\ldots, \alpha_n)$, which means that $\Hb(\w)$ is symmetrizable. In particular, a game is exact potential iff the Hessian matrix is symmetric. 

From this observation and Lemma \ref{lem: symmetrizable matrix} we obtain the following necessary and sufficient conditions, that are based only on the loss functions, for a game to be potential. 
\begin{theorem}
\label{thm: WPG}
A differentiable game is a potential game if and only if the Hessian matrix $\Hb(\w)$ is symmetrizable for all $\w$.  That is, if and only if the following conditions hold for all $\w$:
\begin{enumerate}
\item $\Hb_{ij}(\w)=0$ implies $\Hb_{ji}(\w)=0$ for all $1\leq i\leq j\leq n$.
\item $\Hb_{i_1i_2}(\w)\Hb_{i_2i_3}(\w)\ldots \Hb_{i_ki_1}(\w)=\Hb_{i_2i_1}(\w)\Hb_{i_3i_2}(\w)\ldots \Hb_{i_1i_k}(\w)$ for all $k\geq 3$ and $i_1,i_2,\ldots,i_k\in\{1,2,\ldots,n\}$.
\end{enumerate}
\end{theorem}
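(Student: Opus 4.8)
The plan is to deduce the statement as a chain of equivalences from the three results already assembled: the Monderer--Shapley criterion (Theorem~\ref{thm: MondererShapley1996}), the reformulation of the weighted-symmetry condition \eqref{eq: potential condition} as the symmetry of $M\Hb(\w)$, and Hearon's characterisation of symmetrizability (Lemma~\ref{lem: symmetrizable matrix}). In brief: a game is potential $\iff$ \eqref{eq: potential condition} holds with some positive weights $\iff$ there is a diagonal $M=\mathrm{diag}(\alpha_1,\dots,\alpha_n)$ with $M\Hb(\w)$ symmetric $\iff$ $\Hb(\w)$ is symmetrizable in the sense of Definition~\ref{def: symmetrizable} $\iff$ conditions (1)--(2) hold. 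I would present the two directions separately, reading the two displayed conditions directly off Lemma~\ref{lem: symmetrizable matrix} applied to $\Hb(\w)$.

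For the forward implication the argument is immediate. If the game is potential, Theorem~\ref{thm: MondererShapley1996} furnishes constants $\alpha_i>0$ with $\alpha_i\nabla^2_{\w_i\w_j}\ell_i=\alpha_j\nabla^2_{\w_i\w_j}\ell_j$ for all $i,j$ and all $\w$; since $(M\Hb)_{ij}=\alpha_i\Hb_{ij}$, this says exactly that $M\Hb(\w)$ is symmetric, so $\Hb(\w)$ is symmetrizable at every $\w$, and Lemma~\ref{lem: symmetrizable matrix} yields conditions (1) and (2) pointwise in $\w$.

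The converse, and the step I expect to be the real obstacle, is the passage from pointwise symmetrizability back to a potential. Conditions (1)--(2) let Lemma~\ref{lem: symmetrizable matrix} produce, for each fixed $\w$, a diagonal symmetrizer $M(\w)$; but Theorem~\ref{thm: MondererShapley1996} requires the weights $\alpha_i$, and hence $M$, to be the \emph{same} for every $\w$. Reconciling these is delicate because Hearon's lemma acts on a single matrix and gives no control over how $M(\w)$ depends on $\w$. I would therefore exploit the explicit form of the symmetrizer: the ratios $\alpha_i/\alpha_j$ are pinned down by the entrywise ratios $\Hb_{ji}/\Hb_{ij}$ (equivalently by the cycle products in condition (2)), and I would try to show, using the continuity of $\w\mapsto\Hb(\w)$ on a connected domain with fixed zero-pattern, that a suitably normalised $M(\w)$ is forced to be constant; this $\w$-uniformity is precisely what upgrades ``symmetrizable at each $\w$'' to ``potential''. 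I anticipate this to be the crux, and expect it to require either an additional structural hypothesis on $\Hb$ or a strengthening of (1)--(2), since pointwise symmetrizability alone does not obviously fix the ratios across $\w$.

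A secondary point needing care is the block structure: the entries $\Hb_{ij}=\nabla^2_{\w_i\w_j}\ell_i$ are $d_i\times d_j$ matrices rather than scalars, whereas Lemma~\ref{lem: symmetrizable matrix} is phrased for scalar matrices with a diagonal symmetrizer. The chain above is cleanest when each $d_i=1$; in general I would either interpret (1) as the matrix statement $\Hb_{ij}=0\Rightarrow\Hb_{ji}=0$ and (2) as the corresponding identity for matrix cycle products (with transposes inserted so the factors chain dimensionally), or reduce to a block version of Hearon's lemma for the symmetrizer $M=\mathrm{diag}(\alpha_1 I_{d_1},\dots,\alpha_n I_{d_n})$.
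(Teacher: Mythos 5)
Your chain of equivalences is exactly the route the paper takes (the paper offers no further argument: it derives the theorem directly from the observation that \eqref{eq: potential condition} says $M\Hb(\w)$ is symmetric, combined with Lemma \ref{lem: symmetrizable matrix}), and your forward direction is complete. The difficulty you flag in the converse is a genuine gap, not a technicality, and the continuity argument you propose cannot close it: pointwise symmetrizability of $\Hb(\w)$ really does not force the symmetrizer to be independent of $\w$, because $\Hb_{ij}=\nabla^2_{\w_i\w_j}\ell_i$ and $\Hb_{ji}=\nabla^2_{\w_j\w_i}\ell_j$ come from two unrelated loss functions. Concretely, take $n=2$, $d_1=d_2=1$, $\ell_1(\w)=w_1w_2$ and $\ell_2(\w)=w_1w_2+\tfrac13 w_1w_2^3$. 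Then
\[
\Hb(\w)=\begin{pmatrix}0 & 1\\ 1+w_2^2 & 2w_1w_2\end{pmatrix},
\]
which satisfies conditions (1) and (2) at every $\w$: both off-diagonal entries are everywhere nonzero, and for $n=2$ every cycle product in (2) is automatically balanced, since in a closed walk on two indices the number of $1\to2$ steps equals the number of $2\to1$ steps, so the forward and reversed products involve the same multiset of entries. Hence $\Hb(\w)$ is symmetrizable at each $\w$, with symmetrizer $\mathrm{diag}(1+w_2^2,\,1)$; yet no constant weights $\alpha_1,\alpha_2>0$ satisfy $\alpha_1\cdot 1=\alpha_2\,(1+w_2^2)$ for all $w_2$, so by Theorem \ref{thm: MondererShapley1996} the game is not potential. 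The ratio pinned down by Hearon's lemma genuinely varies with $\w$, and no normalisation rescues this.

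The upshot is that the ``if'' direction of the statement, as written, is false, and your instinct that an extra hypothesis is needed is correct: the right converse requires a \emph{single} diagonal $M$ symmetrizing $\Hb(\w)$ for all $\w$ simultaneously (equivalently, that the ratios $\Hb_{ji}(\w)/\Hb_{ij}(\w)$, or the corresponding cycle ratios, are constant in $\w$ wherever defined), and the pointwise conditions (1)--(2) do not deliver this. Your secondary observation about the block structure is also well taken -- Lemma \ref{lem: symmetrizable matrix} is a scalar statement, and the cycle products in condition (2) do not even typecheck when the $d_i$ differ -- but the $\w$-uniformity of the symmetrizer is the decisive obstruction. Since the paper itself supplies no argument for this step, your proposal does not diverge from the paper so much as expose what the paper leaves unproved.
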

We consider an important class of games recently studied in \cite{Minarchenko2018} where the loss functions are given by
\begin{equation}
\label{eq: nonquadratic games}
\ell_i(\w)=\hat{\ell}_{i}(\w_i)+\sum_{1\leq j\neq i\leq n}\w_i^T C\w_j,
\end{equation}
where the $\hat{\ell}_i$ are generally non-quadratic functions. Note that $C$ is simply the Hessian matrix of $\ell_i$ with respect to $\w=(\w_1,\ldots, \w_n)$. According to \eqref{eq: potential condition} (see also \cite[Proposition 3.1]{Minarchenko2018}) we obtain the following lemma.
\begin{lemma} 
\label{lem: weighted potential games}
The game with loss functions $\ell_i$ given by \eqref{eq: nonquadratic games} is potential if and only if there exist positive numbers $\{\alpha_i\}_{i=1}^n$ such that
\begin{equation}
\alpha_i C_{ij}=\alpha_j C_{ji}^T.
\end{equation}
\end{lemma}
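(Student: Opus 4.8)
The plan is to invoke the Monderer--Shapley criterion (Theorem \ref{thm: MondererShapley1996}) directly and reduce the claim to a one-line Hessian computation. That theorem tells us the game is potential if and only if there exist positive weights $\{\alpha_i\}_{i=1}^n$ with $\alpha_i\nabla^2_{\w_i\w_j}\ell_i=\alpha_j\nabla^2_{\w_i\w_j}\ell_j$ for all $i,j$. So it suffices to evaluate the mixed Hessian blocks of the special loss functions \eqref{eq: nonquadratic games} and read off what this criterion becomes; here I interpret $C_{ij}$ as the $(i,j)$ block of $C$ so that the coupling term is $\w_i^T C_{ij}\w_j$.

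First I would note that the self-interaction term $\hat{\ell}_i(\w_i)$ depends only on $\w_i$, hence contributes to the diagonal block $\nabla^2_{\w_i\w_i}\ell_i$ but is annihilated by any cross-derivative $\nabla_{\w_j}$ with $j\neq i$. Thus for $i\neq j$ only the bilinear coupling survives. Differentiating $\ell_i=\hat{\ell}_i(\w_i)+\sum_{k\neq i}\w_i^T C_{ik}\w_k$ once in $\w_i$ gives $\nabla_{\w_i}\ell_i=\nabla\hat{\ell}_i(\w_i)+\sum_{k\neq i}C_{ik}\w_k$, and a further derivative in $\w_j$ isolates the $k=j$ term, so that $\nabla^2_{\w_i\w_j}\ell_i=C_{ij}$.

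Next I would perform the analogous computation for $\ell_j$, where the only term depending on both $\w_i$ and $\w_j$ is $\w_j^T C_{ji}\w_i$. This is the single point requiring care, and it is the main (essentially only) obstacle: because the roles of $\w_i$ and $\w_j$ are exchanged relative to the ordering of the Hessian block, a transpose appears. I would verify it entrywise, writing $\w_j^T C_{ji}\w_i=\sum_{a,b}(\w_j)_a (C_{ji})_{ab}(\w_i)_b$, so that the $(c,d)$ entry of $\nabla^2_{\w_i\w_j}\ell_j$ is $(C_{ji})_{dc}=(C_{ji}^T)_{cd}$; that is, $\nabla^2_{\w_i\w_j}\ell_j=C_{ji}^T$.

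Finally, substituting these two blocks into the Monderer--Shapley condition converts $\alpha_i\nabla^2_{\w_i\w_j}\ell_i=\alpha_j\nabla^2_{\w_i\w_j}\ell_j$ into $\alpha_i C_{ij}=\alpha_j C_{ji}^T$ for all $i\neq j$, while the diagonal case $i=j$ is vacuous since both sides coincide. This yields the asserted equivalence, with the weights $\{\alpha_i\}_{i=1}^n$ being exactly those provided by Theorem \ref{thm: MondererShapley1996}.
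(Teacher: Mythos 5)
Your proof is correct and follows essentially the same route as the paper, which obtains the lemma as a direct application of the Monderer--Shapley criterion \eqref{eq: potential condition}; the paper simply omits the block-Hessian computation $\nabla^2_{\w_i\w_j}\ell_i=C_{ij}$ and $\nabla^2_{\w_i\w_j}\ell_j=C_{ji}^T$ that you carry out explicitly (and your handling of the transpose and of the vacuous diagonal case is right).
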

As a direct application of Theorem \ref{thm: WPG} we obtain the following necessary and sufficient conditions for the game with loss functions \eqref{eq: nonquadratic games} to be potential. Note that this condition is directly based on the matrix $C$ only. 
\begin{corollary}The game 
with loss functions $\ell_i$ given by \eqref{eq: nonquadratic games} is weighted potential if and only if $C$ is symmetrizable. 
%That is, if and only if the following conditions hold
%\begin{enumerate}
%\item $C_{ij}=0$ implies $C_{ji}=0$ for all $1\leq i\leq j\leq n$.
%\item $C_{i_1i_2}C_{i_2i_3}\ldots C_{i_ki_1}=C_{i_2i_1}C_{i_3i_2}\ldots C_{i_1i_k}$ for all $k\geq 3$ and $i_1,i_2,\ldots,i_k\in\{1,2,\ldots,n\}$.
%\end{enumerate}
\end{corollary}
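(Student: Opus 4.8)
The plan is to read the claim off directly from Lemma~\ref{lem: weighted potential games}, which has already reduced ``weighted potential'' for the family \eqref{eq: nonquadratic games} to the algebraic identities $\alpha_i C_{ij}=\alpha_j C_{ji}^T$ for all $i,j$ and some positive weights $\alpha_i$. So essentially the only content is to recognize these identities as the symmetrizability of $C$ in the sense of Definition~\ref{def: symmetrizable}. First I would set $M=\mathrm{diag}(\alpha_1,\ldots,\alpha_n)$ and note that the $(i,j)$ block of $MC$ is $\alpha_i C_{ij}$, so that $MC$ is (block-)symmetric precisely when $\alpha_i C_{ij}=(\alpha_j C_{ji})^T=\alpha_j C_{ji}^T$ for every pair $i,j$. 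Hence the existence of weights as in Lemma~\ref{lem: weighted potential games} is exactly the existence of an invertible diagonal $M$ with $MC$ symmetric, i.e. the symmetrizability of $C$.

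The forward direction is then immediate: if the game is weighted potential, the positive weights $\alpha_i$ furnish an invertible diagonal symmetrizer $M$, so $C$ is symmetrizable. For the converse I would start from a symmetrizer $D$ of $C$, expand $DC=(DC)^T$ blockwise to recover $d_i C_{ij}=d_j C_{ji}^T$, and feed the $d_i$ back as weights into \eqref{eq: potential condition}. In the spirit of the sentence preceding the corollary one can argue equally well through Theorem~\ref{thm: WPG}: the off-diagonal blocks of the game Hessian \eqref{eq: Hessian} for the losses \eqref{eq: nonquadratic games} are exactly $\Hb_{ij}=C_{ij}$ for $i\neq j$, while the diagonal blocks $\nabla^2_{\w_i\w_i}\hat{\ell}_i$ are automatically symmetric and remain symmetric under any scalar rescaling; since conditions (1)--(2) of Theorem~\ref{thm: WPG} only constrain the off-diagonal entries, the Hessian is symmetrizable for all $\w$ if and only if $C$ is, which is the assertion via Lemma~\ref{lem: symmetrizable matrix}.

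The step I expect to require care is the \emph{positivity} of the symmetrizer. A weighted potential game demands positive weights $\alpha_i$, whereas Definition~\ref{def: symmetrizable} only asks for an invertible diagonal $D$, whose entries may have mixed signs; an antisymmetric coupling such as $C=\bigl(\begin{smallmatrix}0&1\\-1&0\end{smallmatrix}\bigr)$ is symmetrizable (by $\mathrm{diag}(1,-1)$) yet admits no positive symmetrizer, and the associated game is not weighted potential. I would therefore make explicit the convention that the symmetrizer be taken positive, and then verify that a symmetrizer produced by Lemma~\ref{lem: symmetrizable matrix} can indeed be chosen with positive diagonal exactly when the two-cycle relations $\alpha_i C_{ij}=\alpha_j C_{ji}^T$ are sign-consistent (no sign-reversing cycle of $C$). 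Tracking this sign condition, rather than the bookkeeping translation between $MC$ and $C$, is where the genuine attention is needed.
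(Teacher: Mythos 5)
Your proposal is correct and follows the same route the paper intends: the paper gives no explicit proof, treating the corollary as immediate from Lemma~\ref{lem: weighted potential games} (equivalently Theorem~\ref{thm: WPG}), which is precisely the blockwise identification you carry out, namely that $\alpha_i C_{ij}=\alpha_j C_{ji}^T$ for all $i,j$ holds exactly when $MC$ is symmetric for $M=\mathrm{diag}(\alpha_1,\ldots,\alpha_n)$. Your caveat about positivity of the symmetrizer is well taken and goes beyond what the paper records: Definition~\ref{def: symmetrizable} only demands an invertible diagonal $D$, and your example $C=\bigl(\begin{smallmatrix}0&1\\-1&0\end{smallmatrix}\bigr)$ shows that the converse direction genuinely requires $D$ to be sign-definite (hence positive after a global sign flip), a point the paper's statement and the surrounding discussion of Theorem~\ref{thm: WPG} gloss over.
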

Thanks to Lemma \ref{lem: symmetrizable matrix}, \cite[Algorithm 1]{Dias2016} provides an algorithm to determine whether the game with loss functions given by \eqref{eq: nonquadratic games} is a potential game or not.
\section{Generalized potential games}
\label{sec: gPG}
In this section, we introduce a generalization of potential games which we call \textit{generalized potential games} where the simultaneous gradient of the loss functions is a \textit{nonlinear} function of the gradient of a potential function. In the first part, we give in details the concept, as well as its characterization. The second part provides an example of generalized potential games arising from chemical reaction network theory.

\subsection{Generalized potential games}
\label{subsec:concept}

From \eqref{eq: simultaneous gradient}, we notice that \eqref{eq: potential game} is equivalent to
\begin{equation}
\label{eq: M}
\xi=M^{-1}\nabla_{\w}\phi\quad\text{where}\quad M=\mathrm{diag}(\alpha_1,\ldots,\alpha_n).
\end{equation}
Thus to define a potential game we need three elements: an underlying Hilbert space $\Z$ (in this case, $\Z=\R^d$), a driving functional $\phi:\Z\to\R$, and an operator that maps the tangential bundle $T\Z = Z\times Z$ to $\Z$ (in this case it is a matrix multiplication with $M^{-1}$). 
Motivated by this observation and the theory of generalized gradient flow developed recently \cite{Mielke2011, Mielke2014}, we provide a concept of generalized potential game. 
\begin{definition}[\cite{Mielke2014}]
\label{def: dissipation}
A function $\Psi: \Z\times T\Z\rightarrow \R$ is called a dissipation potential if for all $z\in\Z$ 
\begin{enumerate}[(i)]
\item $\Psi(z,\cdot)$ is convex in the second argument,
\item $\min \Psi(z,\cdot)=0$ and
\item $\Psi(z,0)=0$.
\end{enumerate}
\end{definition}
Given a dissipation potential $\Psi$, we recall that the Legendre-Fenchel dual (convex dual),  $\Psi^{\ast}:\Z\times T^{\ast}\Z\rightarrow \R$, of $\Psi$, is defined as follows: 
\begin{equation}
\Psi^{\ast}(z,\zeta)=\sup_{s\in T_z\Z}\big\{\langle \zeta,s\rangle-\Psi(z,s) \big\}\quad\text{for}\quad \zeta\in T_z^{\ast}\Z.
\end{equation}
The following lemma provides useful properties of Legendre-Fenchel duality.
\begin{lemma}\cite{fenchel1949}
\label{lem: LF} Let $\Z$ be a reflexive Banach space and $\Psi: \Z\rightarrow \R\cup \{\infty\}$ be proper, convex and lower semicontinuous. Then for every $\xi\in\Z^\ast$ and every $s\in \Z$ the following statements are equivalent
\begin{equation}
\label{eq: LF equivalent}
(i)~ \xi \in\D\Psi(s)\quad \Longleftrightarrow\quad (ii)~ s\in \D\Psi^\ast(\xi)\quad \Longleftrightarrow \quad (iii)~ \Psi(s)+\Psi^\ast(\xi)=\langle \xi,s\rangle.
\end{equation}
\end{lemma}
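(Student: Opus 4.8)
The plan is to prove the three-way equivalence by bootstrapping off the Fenchel--Young inequality: first establish (i) $\Leftrightarrow$ (iii) by an elementary rearrangement, and then obtain (ii) $\Leftrightarrow$ (iii) by applying the same argument to $\Psi^\ast$ together with biconjugation. First I would record the \emph{Fenchel--Young inequality}. Directly from the definition $\Psi^\ast(\xi)=\sup_{s\in\Z}\{\langle\xi,s\rangle-\Psi(s)\}$ one has, for every $\xi\in\Z^\ast$ and $s\in\Z$, the bound
\[
\Psi(s)+\Psi^\ast(\xi)\ge\langle\xi,s\rangle,
\]
and equality (that is, statement (iii)) holds precisely when $s$ attains the supremum defining $\Psi^\ast(\xi)$, i.e. when $s$ maximizes $s'\mapsto\langle\xi,s'\rangle-\Psi(s')$.

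Next I would prove (i) $\Leftrightarrow$ (iii). By definition, $\xi\in\D\Psi(s)$ means $\Psi(s')\ge\Psi(s)+\langle\xi,s'-s\rangle$ for all $s'\in\Z$, which rearranges to $\langle\xi,s\rangle-\Psi(s)\ge\langle\xi,s'\rangle-\Psi(s')$ for every $s'$. Taking the supremum over $s'$ gives $\langle\xi,s\rangle-\Psi(s)\ge\Psi^\ast(\xi)$, and combined with Fenchel--Young this forces equality, which is (iii). Conversely, if (iii) holds then $s$ is a maximizer in the definition of $\Psi^\ast(\xi)$, and unwinding maximality returns the subgradient inequality defining (i). Observe that this step uses only properness of $\Psi$; neither convexity, lower semicontinuity, nor reflexivity enters here.

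For (ii) $\Leftrightarrow$ (iii), I would apply the equivalence (i) $\Leftrightarrow$ (iii) just proved, but to the conjugate $\Psi^\ast$ in place of $\Psi$ and with the roles of $s$ and $\xi$ interchanged, using reflexivity to identify $\Z^{\ast\ast}$ with $\Z$. This yields that $s\in\D\Psi^\ast(\xi)$ if and only if $\Psi^\ast(\xi)+\Psi^{\ast\ast}(s)=\langle\xi,s\rangle$. It then remains only to replace $\Psi^{\ast\ast}$ by $\Psi$, and this is where the remaining hypotheses are consumed: since $\Z$ is reflexive and $\Psi$ is proper, convex and lower semicontinuous, the Fenchel--Moreau theorem gives $\Psi^{\ast\ast}=\Psi$, so the condition becomes exactly (iii) and the chain of equivalences closes.

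The main obstacle is precisely this biconjugation identity $\Psi^{\ast\ast}=\Psi$. In contrast to the purely algebraic manipulations behind (i) $\Leftrightarrow$ (iii), it rests on a Hahn--Banach separation argument: one shows that the epigraph of a proper, convex, lower semicontinuous function equals the intersection of all the closed affine half-spaces containing it, which identifies $\Psi^{\ast\ast}$, the largest closed convex minorant of $\Psi$, with $\Psi$ itself; reflexivity is what lets us regard $\D\Psi^\ast$ as a subset of $\Z$ rather than of $\Z^{\ast\ast}$. In writing the proof I would either invoke Fenchel--Moreau as a cited black box or, for a self-contained treatment, separate the point $(s,\Psi^{\ast\ast}(s))$ from the closed convex epigraph of $\Psi$ to derive the needed equality.
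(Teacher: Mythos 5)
The paper does not prove this lemma at all: it is quoted as a known result from the cited reference (Fenchel's duality theorem), so there is no internal proof to compare against. Your argument is the standard and correct one --- Fenchel--Young plus the rearrangement for (i) $\Leftrightarrow$ (iii), then biconjugation via Fenchel--Moreau and reflexivity for (ii) $\Leftrightarrow$ (iii) --- and you correctly identify exactly where each hypothesis (properness, convexity, lower semicontinuity, reflexivity) is consumed. The only point worth flagging is notational: although the paper says $\D$ denotes the Fr\'echet differential, the set-membership $\xi\in\D\Psi(s)$ only makes sense (and the lemma only holds in this generality) if $\D\Psi$ is read as the convex subdifferential, which is how you have correctly interpreted it.
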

Note that in this section, we will use the notation $\D$ for the Fr\'echet
differential to indicate that the setup might be applicable to a more general space than $\R^d$.
\begin{definition}
\label{def: GPG}
A differentiable game with loss functions $\{\ell_i:\Z\rightarrow\R\}_{i=1}^n$ is called a generalized potential game if there exist a dissipation potential $\Psi:T\Z \rightarrow\R$ and a potential function $\E:\Z \rightarrow\R$ such that
\begin{equation}
\label{eq: generalizedPG}
\xi(\w)=\D_\zeta\Psi^\ast(\D_{\w}\E)
\end{equation}
where $\xi(\w)$ is the simultaneous gradient defined in \eqref{eq: simultaneous gradient}.
\end{definition}
According to Lemma \ref{lem: LF} , \eqref{eq: generalizedPG} is equivalent to
\begin{equation}
\label{eq: equiv generalizedPG}
\D_\w\E(\w)=\D_s\Psi(\xi(\w)).
\end{equation}
Using the duality between $\Psi$ and $\Psi^\ast$, we have the following equivalences for Conditions $(ii)$ and $(iii)$ in Definition \ref{def: dissipation}: 
\begin{align*}
\min \Psi(\cdot)=0\qquad&\Longleftrightarrow\qquad\Psi^\ast(0)=0
\\ \Psi(0)=0\qquad&\Longleftrightarrow \qquad \min\Psi^\ast=0.
\end{align*}
Thus both $\Psi\geq 0$ and $\Psi^\ast \geq 0$ attain $0$ at $0$. Condition \eqref{eq: generalizedPG} implies that
\[
\xi(\w)=0\qquad\Longleftrightarrow\qquad\D_\zeta\Psi^\ast(\D_{\w}\E)=0\qquad\Longleftrightarrow \quad\D_{\w}\E=0,
\]
so that, as in a potential game, in a generalized potential game the negative of the simultaneous gradient $\xi$ is the same as that of the potential function $\E$. A potential game is a special case of generalized potential games when the dissipation potential is quadratic (i.e., its derivative $\D_\zeta\Psi^\ast(\zeta)$ is linear)
\begin{equation}
\label{eq: Psi-potential game}
\Psi(s)=\frac{1}{2}s^T M s,~\Psi^\ast(\zeta)=\frac{1}{2}\zeta^T M^{-1}\zeta, \quad \D_\zeta \Psi^\ast(\zeta)=M^{-1} \zeta, \quad \E=\phi,
\end{equation}
where $M=\mathrm{diag}(\alpha_1,\ldots,\alpha_n)$.
Thus potential games correspond to the case where $\D_\zeta\Psi^\ast(\zeta)$ is a linear function of $\zeta$. A generalized potential game is then a natural extension of a potential game since it allows \textit{nonlinear games}, i.e., $\D_\zeta \Psi^\ast(\zeta)$ can be a nonlinear function of $\zeta$.
The following theorem extends Theorem \ref{thm: MondererShapley1996} to generalized potential games:
\begin{theorem}\
A game is generalized potential iff the matrix $\D^2\Psi(\xi(\w))\Hb(\w)$ is symmetric. In this case, a potential $\E$ satisfies
\begin{equation}
\label{eq: potential}
\E(z_1)-\E(z_0)=\int_0^1 z'(t)\cdot\D_{s}\Psi(\xi(z(t)))\,dt \qquad\forall\, z_0,z_1\in\Z,
\end{equation}
where $z:[0,1]\rightarrow \Z$ is a continuously differentiable path in $\Z$ that connects two strategy profiles $z_0$ and $z_1$. That is $z(0)=z_0$ and $z(1)=z_1$.
\end{theorem}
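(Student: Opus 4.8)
The plan is to reduce the characterization to the classical Poincar\'e integrability condition for a vector field to be a gradient, after first rewriting the defining relation in its dual form. By Lemma \ref{lem: LF}, the defining identity \eqref{eq: generalizedPG} is equivalent to its dual form \eqref{eq: equiv generalizedPG}, namely $\D_\w\E(\w)=\D_s\Psi(\xi(\w))$. Hence the game is generalized potential precisely when the vector field $G(\w):=\D_s\Psi(\xi(\w))$ admits a scalar potential $\E$ on $\Z$, i.e.\ when $G$ is a gradient field. The whole statement thus splits into (a) deciding when $G$ is exact and (b) writing the resulting $\E$ explicitly.

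For (a) I would invoke the standard fact that on $\Z=\R^d$, which is simply connected, a continuously differentiable vector field $G$ is the gradient of some function if and only if its Jacobian $\D_\w G(\w)$ is symmetric for every $\w$; so the task becomes to compute this Jacobian and recognise it as $\D^2\Psi(\xi(\w))\Hb(\w)$. Differentiating $G(\w)=\D_s\Psi(\xi(\w))$ by the chain rule gives $\D_\w G(\w)=\D^2\Psi(\xi(\w))\,\D_\w\xi(\w)$. The key observation is that the Jacobian of the simultaneous gradient is exactly the Hessian of the game: since $\xi(\w)=(\nabla_{\w_1}\ell_1,\ldots,\nabla_{\w_n}\ell_n)$ by \eqref{eq: simultaneous gradient}, its $(i,j)$ block is $\nabla^2_{\w_i\w_j}\ell_i$, so that $\D_\w\xi(\w)=\Hb(\w)$ with $\Hb$ as in \eqref{eq: Hessian}. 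This yields $\D_\w G(\w)=\D^2\Psi(\xi(\w))\Hb(\w)$, whose symmetry is precisely the integrability condition. Both implications then follow at once: if the game is generalized potential, $\D_\w G$ is the Hessian of $\E$ and hence symmetric; conversely, if $\D^2\Psi(\xi(\w))\Hb(\w)$ is symmetric, then $G$ is exact and, via Lemma \ref{lem: LF}, \eqref{eq: generalizedPG} holds.

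For (b), once exactness of $G$ is established I would fix $z_0$, define $\E$ through the path integral, and note that the symmetry condition is exactly what makes the integral path-independent, so that $\E$ is well defined. Differentiating $\E(z(t))$ along any $C^1$ path joining $z_0$ to $z_1$ and applying the fundamental theorem of calculus gives $\E(z_1)-\E(z_0)=\int_0^1 z'(t)\cdot\D_\w\E(z(t))\,dt=\int_0^1 z'(t)\cdot\D_s\Psi(\xi(z(t)))\,dt$, which is \eqref{eq: potential}; uniqueness up to an additive constant is immediate from path-independence.

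The main obstacle I anticipate is the clean identification $\D_\w\xi=\Hb$ together with a careful application of the chain rule to $\D_s\Psi\circ\xi$, and---should one wish to work beyond $\R^d$ in a general reflexive Banach space as the notation $\D$ suggests---the topological justification of the Poincar\'e lemma, which needs the state space to be simply connected (convexity of $\Z$ suffices). It is worth stressing that $\D^2\Psi(\xi(\w))$ is symmetric as a Hessian whereas $\Hb(\w)$ is in general not, so the symmetry of their product is a genuine coupling constraint between the dissipation potential and the game; in the quadratic case \eqref{eq: Psi-potential game}, where $\D^2\Psi=M$, it specialises to the symmetrizability requirement that $M\Hb$ be symmetric, confirming that the result is a bona fide extension of Theorem \ref{thm: MondererShapley1996}.
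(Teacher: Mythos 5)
Your proof is correct and follows essentially the same route as the paper's: both hinge on the dual reformulation $\D_\w\E=\D_s\Psi(\xi(\w))$ from Lemma \ref{lem: LF}, the chain-rule identification of the Jacobian of $\w\mapsto\D_s\Psi(\xi(\w))$ with $\D^2\Psi(\xi(\w))\Hb(\w)$, and the fundamental theorem of calculus along a path for \eqref{eq: potential}. If anything, you are slightly more complete than the paper, which only writes out the necessity direction (differentiating the defining identity and using equality of mixed partials of $\E$) and leaves the sufficiency implicit, whereas you close the converse explicitly via the Poincar\'e integrability criterion on the simply connected domain $\Z=\R^d$.
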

\begin{proof}
Condition \eqref{eq: equiv generalizedPG} means that
\[
\D_{\w_i}\E=\D_{s_i}\Psi(\xi(\w)) \quad \forall i=1,\ldots,n.
\]
Applying $\D_{\w_j}$ to both sides, noting that $\xi(\w)=(\nabla_{\w_1}\ell_1,\ldots,\nabla_{\w_n}\ell_n)$, we get
\[
 \D_{\w_j}\big(\D_{s_i}\Psi(\xi(\w))\big)=\D^2_{\w_j\w_i}\E=\D^2_{\w_i\w_j}\E=\D_{\w_i}\big(\D_{s_j}\Psi(\xi(\w))\big)\quad\forall i,j,
\]
i.e.,
\begin{equation}
\label{eq: equaility}
\sum_{k=1}^n\D^2_{s_ks_i}\Psi(\xi(\w))\nabla^2_{\w_j\w_k}\ell_k=\sum_{k=1}^n\D^2_{s_ks_j}\Psi(\xi(\w))\nabla^2_{\w_i\w_k}\ell_k.
\end{equation}
Using the Hessian matrix $\Hb(\w)$ defined in \eqref{eq: Hessian}, equality \eqref{eq: equaility} can be written as
\begin{equation}
\label{eq: symmetry}
[\D^2\Psi(\xi(\w))\Hb(\w)]_{ij}=[\D^2\Psi(\xi(\w))\Hb(\w)]_{ji}\quad\forall i,j.
\end{equation}
i.e., the matrix $\D^2\Psi(\xi(\w))\Hb(\w)$ is symmetric.

Next we prove \eqref{eq: potential}. Let $z_0,z_1\in \Z$ and let $z:[0,1]\rightarrow\Z $ be a continuously differntiable path connecting $z_0$ and $z_1$. We have
\begin{align*}
\E(z_1)-\E(z_0)&=\E(z(1))-\E(z(0))
\\&\overset{(*)}{=}\int_0^1 \frac{d}{dt}\E(z(t))\,dt
\\&\overset{(**)}{=}\int_0^1 z'(t)\cdot\D_{\w}\E(z(t))\,dt
\\&\overset{(***)}{=}\int_0^1 z'(t)\cdot\D_{s}\Psi(\xi(z(t)))\,dt
\end{align*}
where we have applied the fundamental theorem of calculus, the chain rule and condition \eqref{eq: equiv generalizedPG} to obtain $(*), (**)$ and $(***)$ respectively.
\end{proof}
\begin{remark}
In a potential game, using \eqref{eq: Psi-potential game}, we have 
\[
\D^2\Psi(\xi(\w))=M=\mathrm{diag}(\alpha_1,\ldots,\alpha_n).
\]
Thus condition \eqref{eq: symmetry} becomes
\[
\alpha_i\nabla^2_{\w_i\w_j}\ell_i=\alpha_j\nabla^2_{\w_i\w_j}\ell_j\quad \forall i,j.
\]
So, we recover the result of Theorem \ref{thm: MondererShapley1996}
\end{remark}
\subsection{Examples in chemical reaction network theory}
\label{sec: chemical reaction games}\hfil

In this section, we will show that the generalized potential games corresponding to a class of loss functions arising from general reversible chemical reaction networks. 
\subsubsection{Chemical reaction networks}
Consider $K$ chemical species $X_1,\ldots, X_K$ subject to $R$ reversible reactions:
\begin{equation}\label{Reactions}
\alpha_1^r X_1+\ldots+\alpha^r_K X_K\xrightleftharpoons[k^r_{\textrm{bw}}]{\,k^r_{\textrm{fw}}\,} \beta_1^r X_1+\ldots+\beta^r_K X_K
\end{equation}
for $r=1,\ldots, R$. The coefficients $\alpha_k^r, \beta_k^r\in \mathbb{N}_0$  are stoichiometric coefficients and $k^r_{\textrm{fw}},k^r_{\textrm{fw}}>0$ are reaction rate constants. Let $\mathbf{c}=(c_1,\ldots, c_K)\in\mathbb{R}^K_+$ denote the concentration vector of species $X_1,\ldots,X_K$. Then the reaction rate equation that characterizes the time-evolution of the concentrations is given by
\begin{equation}
\label{eq: CRE}
\dot{\mathbf{c}}=-\Rb(\mathbf{c})\quad\text{with}~~ \Rb(\mathbf{c}):=\sum_{r=1}^R \Big(k^r_{\textrm{fw}}\mathbf{c}^{\pmb{\alpha}^r}-k^r_{\textrm{bw}}\mathbf{c}^{\pmb{\beta}^r}\Big)\Big(\pmb{\alpha}^r-\pmb{\beta}^r\Big)
\end{equation}
where $\pmb{\alpha}^r = (\alpha^r_i)_{i=1,\ldots, K}$ and $\pmb{\beta}^r = (\beta^r_i)_{i=1,\ldots, K}$, and we use the convention
\begin{equation}\label{convention}
	\mathbf{c}^{\pmb{\alpha}^r} = \prod_{k=1}^Kc_i^{\alpha^r_i}.
\end{equation}
A state $\mbf{w}_\infty \in \mathbb{R}_{+}^n$ is called a chemical equilibrium for \eqref{eq: CRE} if $\mbf{R}(\mbf{w}_\infty) = 0$. Moreover, $\mbf{w}_\infty$ is called a detailed balanced equilibrium if
\begin{equation}\label{equilibrium}
	\krfw \mbf{w}_\infty^{\alpha^r} = \krbw\mbf{w}_{\infty}^{\beta^r} \quad \text{ for all } \quad r = 1,\ldots, R.
\end{equation}
Note that the solution to \eqref{eq: CRE} is positive, i.e. it lies in the positive orthant $\mathbb R_+^n$, as long as the initial data is positive. Moreover, it satisfies
\begin{equation*}
	\mbf{c}(t) = \mbf{c}(0) + \sum_{r=1}^{R}(\pmb{\beta}^r - \pmb{\alpha}^r)\int_0^t\left(\krfw \mbf{c}^{\pmb{\alpha}^r}(s) - \krbw \mbf{c}^{\pmb{\beta}^r}(s)\right)ds.
\end{equation*}
Therefore, if we define the Wegscheider matrix by $W = (\pmb{\beta}^r - \pmb{\alpha}^r)_{r=1,\ldots, R} \in \mathbb R^{n\times R}$ then it follows that
\begin{equation*}
	\mbf{c}(t) \in \mbf{c}(0) + \mathrm{range}(W) \quad \text{ for all } \quad t\geq 0.
\end{equation*}
The affine space $(\mbf{c}(0) + \mathrm{range}(W))\cap \mathbb R_+^n$ is therefore called the {\it positive compatibility class} (or shortly {\it compatibility class}) corresponding to $\mbf{c}(0)$. The following lemma gives the existence and uniqueness of a positive detailed balanced equilibrium.
\begin{proposition}\cite{feinbergbook}\label{pro:equilibrium}
	If the chemical reaction network \eqref{Reactions} has a detailed balanced equilibrium, then all other equilibria are also detailed balanced. In this case we say that the network \eqref{Reactions} is detailed balanced.
	
	If the chemical reaction network \eqref{Reactions} is detailed balanced, then there exists a unique positive equilibrium in each compatibility class.
\end{proposition}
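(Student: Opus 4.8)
The plan is to treat the two assertions separately, using in both the strictly convex relative free energy
\[
F(\mbf{c}) = \sum_{k=1}^{K}\Big( c_k \log\frac{c_k}{(w_\infty)_k} - c_k + (w_\infty)_k \Big),
\]
associated to a fixed detailed balanced equilibrium $\mbf{w}_\infty$, together with the elementary inequality $(e^{x}-e^{y})(x-y)\ge 0$, with equality if and only if $x=y$.

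For the first assertion, let $\mbf{w}_\infty$ be a detailed balanced equilibrium and let $\mbf{c}^\ast\in\R_+^K$ be any positive equilibrium, so that $\sum_{r=1}^R \rho_r(\pmb{\alpha}^r-\pmb{\beta}^r)=0$ with net reaction rates $\rho_r := \krfw(\mbf{c}^\ast)^{\pmb{\alpha}^r}-\krbw(\mbf{c}^\ast)^{\pmb{\beta}^r}$. Writing $u:=\log(\mbf{c}^\ast/\mbf{w}_\infty)$ componentwise and using the detailed balance relations $\krfw\mbf{w}_\infty^{\pmb{\alpha}^r}=\krbw\mbf{w}_\infty^{\pmb{\beta}^r}=:a_r>0$, I would rewrite $\rho_r = a_r\big(e^{\pmb{\alpha}^r\cdot u}-e^{\pmb{\beta}^r\cdot u}\big)$. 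Taking the inner product of the equilibrium identity with $u$ then yields $0=\sum_r a_r\big(e^{\pmb{\alpha}^r\cdot u}-e^{\pmb{\beta}^r\cdot u}\big)\big(\pmb{\alpha}^r\cdot u-\pmb{\beta}^r\cdot u\big)$, a sum of nonnegative terms by the inequality above. Hence every term vanishes, forcing $\pmb{\alpha}^r\cdot u=\pmb{\beta}^r\cdot u$ and therefore $\rho_r=0$ for each $r$; that is, $\mbf{c}^\ast$ is itself detailed balanced.

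For the second assertion I would minimize $F$ over the convex compatibility class $\mathcal{C}=(\mbf{c}(0)+\mathrm{range}(W))\cap\R_+^K$. Since $c\log c - c\to+\infty$, the function $F$ is coercive on the closed nonnegative orthant, so a minimizer exists on $\overline{\mathcal{C}}$. The key point is that this minimizer cannot lie on the boundary: because $\partial_{c_k}F=\log(c_k/(w_\infty)_k)\to-\infty$ as $c_k\to 0^+$, the directional derivative of $F$ from any boundary point of $\mathcal{C}$ towards an interior point (one exists since $\mbf{c}(0)>0$) is $-\infty$, contradicting minimality. At the resulting interior minimizer $\mbf{c}^\ast$ the first-order condition reads $\nabla F(\mbf{c}^\ast)\perp\mathrm{range}(W)$, i.e. $(\pmb{\beta}^r-\pmb{\alpha}^r)\cdot\log(\mbf{c}^\ast/\mbf{w}_\infty)=0$ for all $r$, which combined with detailed balance of $\mbf{w}_\infty$ is exactly $\krfw(\mbf{c}^\ast)^{\pmb{\alpha}^r}=\krbw(\mbf{c}^\ast)^{\pmb{\beta}^r}$; hence $\mbf{c}^\ast$ is a positive, detailed balanced equilibrium in $\mathcal{C}$. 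Uniqueness then follows from strict convexity: $F$ has diagonal Hessian with entries $1/c_k>0$, so it admits at most one minimizer on the convex set $\mathcal{C}$, and by the first assertion every positive equilibrium in $\mathcal{C}$ is detailed balanced and hence satisfies the same interior first-order condition, so it coincides with this unique minimizer.

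The main obstacle is the boundary analysis in the second part, namely establishing rigorously that the entropic minimizer stays strictly positive, the so-called no-boundary-equilibrium property. This rests on the essential smoothness (steepness) of $F$ at $\partial\R_+^K$, and some care is needed to produce a feasible direction into the interior of $\mathcal{C}$ along which $F$ strictly decreases. By contrast, the coercivity argument for existence and the passage from the additive first-order condition back to the multiplicative detailed balance identity are comparatively routine.
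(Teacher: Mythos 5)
The paper offers no proof of this proposition at all --- it is quoted verbatim from \cite{feinbergbook} --- so there is nothing internal to compare against; what you have written is essentially the classical Horn--Jackson/Feinberg free-energy argument, and it is correct. Your first part (rewriting the net rates as $a_r(e^{\pmb{\alpha}^r\cdot u}-e^{\pmb{\beta}^r\cdot u})$ with $u=\log(\mbf{c}^\ast/\mbf{w}_\infty)$, pairing the equilibrium identity with $u$, and invoking $(e^x-e^y)(x-y)\ge 0$ with equality iff $x=y$) is exactly the standard proof that the set of positive equilibria of a detailed balanced network coincides with the set of detailed balanced equilibria; note it only covers \emph{positive} equilibria, which is consistent with the paper's convention that ``equilibrium'' means a state in the open positive orthant, boundary equilibria being treated separately in the remark that follows the proposition. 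Your second part is the standard entropy-minimization proof of existence and uniqueness within a compatibility class, and the one genuinely delicate step --- ruling out minimizers of $F$ on $\partial\R_+^K$ --- is handled correctly by your observation that along the segment from a boundary minimizer toward the interior point $\mbf{c}(0)$ the one-sided derivative of $F$ is $-\infty$ (the vanishing coordinates contribute $c_k(0)\log t\to-\infty$ while the positive ones stay bounded). Two small points worth making explicit if you write this up: first, the Lagrange/first-order condition $\nabla F(\mbf{c}^\ast)\perp\mathrm{range}(W)$ is equivalent to $(\pmb{\beta}^r-\pmb{\alpha}^r)\cdot u=0$ for \emph{all} $r$ precisely because the vectors $\pmb{\beta}^r-\pmb{\alpha}^r$ span $\mathrm{range}(W)$ by definition; second, for uniqueness you should state that by the first assertion every positive equilibrium in the class is a critical point of the convex function $F$ restricted to the affine space, hence a global minimizer, hence equal to the unique minimizer --- which you do. In short: no gap; this is a complete and standard proof of a result the paper merely cites.
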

\begin{remark}
	It is remarked that there might exist (possibly infinitely) many {\it boundary equilibria}, i.e. a state $\mbf{c}^*\in \partial\mathbb{R}_+^n$ which satisfies \eqref{equilibrium} within each compatibility class.
\end{remark}

%\vskip 1cm
%\textcolor{red}{Existence of equilibrium state, exponential convergence to equilibrium in general case and explicit rate of convergence for a single reversible reaction.}
%
%The following result \cite[Proposition 2.1]{Mielke2015} proves the existence, uniqueness and characterisation of an equilibrium state of \eqref{eq: CRE}
%\begin{proposition}
%\label{prop: equilibrium states}
%\end{proposition}
%Exponential convergence to equilibrium \cite{Fellner2017}
%\begin{theorem}
%\end{theorem}

\subsubsection{Generalized potential games from chemical reaction networks}
Consider an $n$-player game where the loss functions are given by

\begin{subequations}
\begin{align}
\ell_1(\w)&=\sum_{r=1}^R \Big(\frac{k^r_{\textrm{fw}}}{\alpha_1^r+1}w_1^{\alpha_1^r+1}w_2^{\alpha_2^r}\cdots w_n^{\alpha^r_n}-\frac{k^r_{\textrm{bw}}}{\beta_1^r+1}w_1^{\beta_1^r+1}w_2^{\alpha_2^r}\cdots w_n^{\beta_n^r}
\Big)\Big(\alpha_1^r-\beta_1^r\Big)\label{eq: loss1}
\\&\vdots\nonumber
\\\ell_n(\w)&=\sum_{r=1}^R \Big(\frac{k^r_{\textrm{fw}}}{\alpha_n^r+1}w_1^{\alpha_1^r}w_2^{\alpha_2^r}\cdots w_n^{\alpha^r_n+1}-\frac{k^r_{\textrm{bw}}}{\beta_n^r+1}w_1^{\beta_1^r}w_2^{\alpha_2^r}\cdots w_n^{\beta_n^r+1}
\Big)\Big(\alpha_n^r-\beta_n^r\Big). \label{eq: lossn}
\end{align}
\end{subequations}
where $\w=(w_1,\ldots,w_n)$, 
%and $\pmb{\alpha}=(\alpha_1,\ldots,\alpha_n)$, we denote $\w^{\pmb{\alpha}}=\prod_{k=1}^n w_k^{\alpha_k}$, and for $r=1,\ldots, R$ we denote
 $\pmb{\alpha}^r:=(\alpha_1^r,\ldots,\alpha_n^r)$ and $\pmb{\beta}^r:=(\beta_1^r,\ldots,\beta_n^r)$.

\medskip
The following theorem provides an important class of generalized potential games. This theorem is motivated from a recent theory on the generalized gradient flows of chemical reactions \cite{Mielke2011, Mielke2014}.
\begin{theorem}
\label{thm: Example generalised game}
Suppose there exists ${\w}_\infty\in\mathbb{R}^n_+$ such that \eqref{equilibrium} is satisfied.
Then the game with loss functions \eqref{eq: loss1}-\eqref{eq: lossn} is a generalised potential game. 
\end{theorem}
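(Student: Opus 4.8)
The plan is to verify Definition~\ref{def: GPG} directly, exhibiting an explicit potential $\E$ and a dual dissipation potential $\Psi^\ast$ for which the defining relation $\xi(\w)=\D_\zeta\Psi^\ast(\D_\w\E)$ holds; one could instead invoke the symmetry criterion of the preceding theorem, but the direct construction is cleaner. First I would compute the simultaneous gradient of the losses \eqref{eq: loss1}--\eqref{eq: lossn}. Differentiating $\ell_i$ in $w_i$, the prefactor $1/(\alpha_i^r+1)$ (resp. $1/(\beta_i^r+1)$) cancels the exponent dropped by $\partial_{w_i}w_i^{\alpha_i^r+1}$ (resp. $\partial_{w_i}w_i^{\beta_i^r+1}$), leaving
\[
\nabla_{w_i}\ell_i=\sum_{r=1}^R\bigl(\krfw\,\w^{\pmb{\alpha}^r}-\krbw\,\w^{\pmb{\beta}^r}\bigr)\bigl(\alpha_i^r-\beta_i^r\bigr).
\]
Hence $\xi(\w)=\Rb(\w)$, the reaction-rate field of \eqref{eq: CRE}: the game is built precisely so that its Nash dynamics coincides with the chemical kinetics.

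Next I would take $\E$ to be the Boltzmann relative entropy with respect to the detailed balanced equilibrium $\w_\infty$ supplied by hypothesis \eqref{equilibrium},
\[
\E(\w)=\sum_{i=1}^n\Bigl(w_i\log\frac{w_i}{w_{\infty,i}}-w_i+w_{\infty,i}\Bigr),\qquad \mu_i:=\D_{w_i}\E(\w)=\log\frac{w_i}{w_{\infty,i}},
\]
and introduce the $\cosh$-type dual dissipation potential with geometric-mean prefactor
\[
\Psi^\ast(\w,\zeta)=\sum_{r=1}^R 4\sqrt{\krfw\,\w^{\pmb{\alpha}^r}\,\krbw\,\w^{\pmb{\beta}^r}}\;\Bigl(\cosh\tfrac{(\pmb{\alpha}^r-\pmb{\beta}^r)\cdot\zeta}{2}-1\Bigr).
\]
Both objects are borrowed from the generalized gradient structure of detailed balanced networks \cite{Mielke2011,Mielke2014}.

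The core of the proof is to show $\D_\zeta\Psi^\ast(\w,\zeta)\big|_{\zeta=\mu}=\Rb(\w)$. Differentiating in $\zeta$ and setting $\zeta=\mu$ gives $\D_\zeta\Psi^\ast(\w,\mu)=\sum_r 2\sqrt{\krfw\w^{\pmb{\alpha}^r}\,\krbw\w^{\pmb{\beta}^r}}\,\sinh\bigl(\tfrac12(\pmb{\alpha}^r-\pmb{\beta}^r)\cdot\mu\bigr)(\pmb{\alpha}^r-\pmb{\beta}^r)$, and the hard part is to collapse this hyperbolic sine back into the polynomial flux difference. This is exactly where detailed balance is indispensable: setting $\kappa^r:=\krfw\,\w_\infty^{\pmb{\alpha}^r}=\krbw\,\w_\infty^{\pmb{\beta}^r}$, condition \eqref{equilibrium} yields $\krfw\w^{\pmb{\alpha}^r}=\kappa^r e^{\pmb{\alpha}^r\cdot\mu}$ and $\krbw\w^{\pmb{\beta}^r}=\kappa^r e^{\pmb{\beta}^r\cdot\mu}$, so the prefactor becomes $\kappa^r e^{(\pmb{\alpha}^r+\pmb{\beta}^r)\cdot\mu/2}$ and
\[
2\sqrt{\krfw\w^{\pmb{\alpha}^r}\,\krbw\w^{\pmb{\beta}^r}}\,\sinh\tfrac{(\pmb{\alpha}^r-\pmb{\beta}^r)\cdot\mu}{2}=\kappa^r\bigl(e^{\pmb{\alpha}^r\cdot\mu}-e^{\pmb{\beta}^r\cdot\mu}\bigr)=\krfw\w^{\pmb{\alpha}^r}-\krbw\w^{\pmb{\beta}^r}.
\]
Summing over $r$ reproduces $\Rb(\w)=\xi(\w)$, which is precisely \eqref{eq: generalizedPG}.

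Finally I would confirm that $\Psi^\ast$ is the convex dual of an admissible dissipation potential. For each fixed $\w$ the map $\zeta\mapsto\Psi^\ast(\w,\zeta)$ is a nonnegative combination of $\cosh$'s of linear forms, hence convex and nonnegative, vanishes at $\zeta=0$, and is minimized there; by the Legendre--Fenchel equivalences recorded after Definition~\ref{def: dissipation} together with Lemma~\ref{lem: LF}, these properties transfer to $\Psi=(\Psi^\ast)^\ast$, giving $\Psi(\w,\cdot)$ convex with $\Psi\ge0$, $\Psi(\w,0)=0$ and $\min\Psi(\w,\cdot)=0$. The only caveat to flag is that $\Psi$ depends on the state $\w$ through the fluxes; this state-dependence is permitted by the general framework of Definition~\ref{def: dissipation} and matches the dissipation potentials of \cite{Mielke2014}. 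Producing the pair $(\E,\Psi)$ in this way exhibits the game as generalized potential.
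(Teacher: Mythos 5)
Your proof is correct. The computation of the simultaneous gradient and the identification $\xi(\w)=\Rb(\w)$ coincide with the paper's argument, and the decisive use of detailed balance --- rewriting $\krfw\w^{\pmb{\alpha}^r}=\kappa^r e^{\pmb{\alpha}^r\cdot\pmb{\mu}}$ with $\pmb{\mu}=\nabla\E$ so that the polynomial flux becomes a function of the entropy gradient --- is exactly the mechanism the paper exploits. Where you genuinely diverge is in the choice of gradient structure: the paper's proof uses the \emph{quadratic} dual dissipation potential $\Psi^\ast(\w,\pmb{\mu})=\sum_r\tfrac{\kappa_r}{2}\,\ell\bigl(\w^{\pmb{\alpha}^r}/\w_\infty^{\pmb{\alpha}^r},\w^{\pmb{\beta}^r}/\w_\infty^{\pmb{\beta}^r}\bigr)\bigl(\pmb{\mu}\cdot(\pmb{\alpha}^r-\pmb{\beta}^r)\bigr)^2$ built from the logarithmic mean $\ell(a,b)=(a-b)/(\log a-\log b)$, so that $\D_{\pmb{\mu}}\Psi^\ast$ is linear in $\pmb{\mu}$ with a state-dependent Onsager matrix and the relation takes the explicit matrix form $\xi(\w)=\Hb(\w)\nabla_\w\E(\w)$ generalizing \eqref{eq: M}; the logarithmic mean is what cancels the logarithm in $\nabla\E$. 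You instead use the $\cosh$-type potential from large-deviation theory, which the paper records only in a subsequent remark (there normalized with $\bar\E=\tfrac12\E$ and $\cosh\bigl((\pmb{\alpha}^r-\pmb{\beta}^r)\cdot\pmb{\mu}\bigr)$, equivalent to your half-argument $\cosh$ paired with the full entropy), and your hyperbolic-sine collapse to $\krfw\w^{\pmb{\alpha}^r}-\krbw\w^{\pmb{\beta}^r}$ checks out. Each route buys something: the quadratic structure makes the game look like a potential game with a $\w$-dependent weight matrix, which the paper uses downstream, while your construction is genuinely non-quadratic and you are actually more careful than the paper in verifying that $\Psi=(\Psi^\ast)^\ast$ satisfies the axioms of Definition \ref{def: dissipation} via Legendre--Fenchel duality, a step the paper leaves implicit.
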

\begin{proof}
We show that the game can be viewed as a generalized potential game by showing the existence of a potential function $\E$ and a dissipation function $\Psi^*$. These functions have been derived in the context of generalized gradient flow structure of chemical reactions \cite{Mielke2011}. Let us define the dissipation $\Psi^*$ and the potential function $\E$ by
\begin{align}
 \E(\w)&:=\sum_{i=1}^n w_i\Big(\log\big(\frac{w_i}{w_{i\infty}}\big)-1\Big),\\
 \Psi^*(\w, \pmb{\mu})&=\sum_{r=1}^R\frac{\kappa_r}{2}\ell\Big(\frac{\w^{\pmb{\alpha}^r}}{\w_{\infty}^{\pmb{\alpha}^r}},\frac{\w^{\pmb{\beta}^r}}{\w_{\infty}^{\pmb{\beta}^r}}\Big)\Big(\pmb{\mu}\cdot(\pmb{\alpha}^r-\pmb{\beta}^r)\Big)^2,
\end{align}
where 
\begin{equation}
\label{eq: ell}
\ell(a,b)=\begin{cases}
\frac{a-b}{\log a-\log b}~~\textrm{for}~~a\neq b,\\
b~~\textrm{for}~~ a=b.
\end{cases}
\end{equation}
Note that $\Psi^*$ can be written as the quadratic form
\begin{equation}
\Psi^*(\w,\pmb{\mu})=\frac{1}{2}\pmb{\mu}^T \Hb(\w)\pmb{\mu}~~\text{where}~~\Hb(\w)=\sum_{r=1}^R\kappa_r\ell\Big(\frac{\w^{\pmb{\alpha}^r}}{\w_{\infty}^{\pmb{\alpha}^r}},\frac{\w^{\pmb{\beta}^r}}{\w_{\infty}^{\pmb{\beta}^r}}\Big) (\pmb{\alpha}^r-\pmb{\beta}^r)(\pmb{\alpha}^r-\pmb{\beta}^r)^T.
\end{equation}
Hence $\nabla_{\w}\Psi^*(\w,\pmb{\mu})=\Hb(\w)\pmb{\mu}$. Therefore, 
$$
\nabla_{\w}\Psi^*(\w,\nabla_{\w}\E(\w))=\Hb(\w)\nabla_{\w}\E(\w).
$$
We compute
\begin{align*}
&\nabla_{\w}\E(\w)=\begin{pmatrix}
\log\Big(\frac{w_1}{w_{1\infty}}\Big)\\
\vdots\\
\log\Big(\frac{w_n}{w_{n\infty}}\Big)
\end{pmatrix}=:\log\Big(\frac{\w}{\w_{\infty}}\Big)
\\& \mathcal D_{\pmb{\mu}}\Psi^*(\w,\pmb{\mu})=\sum_{r=1}^R\kappa_r\ell\Big(\frac{\w^{\pmb{\alpha}^r}}{\w_{\infty}^{\pmb{\alpha}^r}},\frac{\w^{\pmb{\beta}^r}}{\w_{\infty}^{\pmb{\beta}^r}}\Big)\Big(\pmb{\mu}\cdot(\pmb{\alpha}^r-\pmb{\beta}^r)\Big)\Big(\pmb{\alpha}^r-\pmb{\beta}^r\Big).
\end{align*}
Since
$$
\log\Big(\frac{\w_n}{\w_{n\infty}}\Big)\cdot\Big(\pmb{\alpha}^r-\pmb{\beta}^r\Big)
=\log\Big(\frac{\w}{\w_{\infty}}\Big)^{\pmb{\alpha}^r-\pmb{\beta}^r},\quad \ell\Big(\frac{\w^{\pmb{\alpha}^r}}{\w_{\infty}^{\pmb{\alpha}^r}},\frac{\w^{\pmb{\beta}^r}}{\w_{\infty}^{\pmb{\beta}^r}}\Big)=\frac{\w^{\pmb{\alpha}^r}/\w_{\infty}^{\pmb{\alpha}^r}-\w^{\pmb{\beta}^r}/\w_{\infty}^{\pmb{\beta}^r}}{\log\Big(\frac{\w}{\w_{\infty}}\Big)^{\pmb{\alpha}^r-\pmb{\beta}^r} }
$$
we obtain
\begin{align}
\label{eq: DPsi}
 \mathcal D_{\pmb{\mu}}\Psi^*(\w,\nabla_{\w}\E(\w))&=\sum_{r=1}^R\kappa_r\Big(\frac{\w^{\pmb{\alpha}^r}}{\w_\infty^{\pmb{\alpha}^r}}-\frac{\w^{\pmb{\beta}^r}}{\w_\infty^{\pmb{\beta}^r}}\Big)\Big(\pmb{\alpha}^r-\pmb{\beta}^r\Big)\nonumber
\\&=\sum_{r=1}^R\Big(k^r_{\textrm{fw}}\w^{\pmb{\alpha}^r}-k^r_{\textrm{bw}}\w^{\pmb{\beta}^r}\Big)\Big(\pmb{\alpha}^r-\pmb{\beta}^r\Big),
\end{align}
where we have used the detailed balance condition \eqref{equilibrium} to obtain the last equality. On the other hand, we have for $j=1,\ldots, n$,
\begin{align*}
\nabla_{w_j}\ell_j(\w)&=\sum_{r=1}^R \Big(k^r_{\textrm{fw}}\w^{\pmb{\alpha}^r}-k^r_{\textrm{bw}}\w^{\pmb{\beta}^r}\Big)\Big(\alpha_j^r-\beta_j^r\Big).
%\\&\vdots
%\\\nabla_{w_n}\ell_n(\w)&=\sum_{r=1}^R \Big(k^r_{\textrm{fw}}\w^{\pmb{\alpha}^r}-k^r_{\textrm{bw}}\w^{\pmb{\beta}^r}\Big)\Big(\alpha_n^r-\beta_n^r\Big).
\end{align*}
Using the simultaneous gradient in \eqref{eq: simultaneous gradient} we can write these in a compact form as
\begin{equation}
\label{eq: xi}
\xi (\w)=\sum_{r=1}^R \Big(k^r_{\textrm{fw}}\w^{\pmb{\alpha}^r}-k^r_{\textrm{bw}}\w^{\pmb{\beta}^r}\Big)\Big(\pmb{\alpha}^r-\pmb{\beta}^r\Big)    
\end{equation}
From \eqref{eq: DPsi} and \eqref{eq: xi} we get
\begin{equation}
\label{eq: generalized relation}
\xi(\w)=\mathcal{D}_{\pmb{\mu}}\Psi^*(\w,\nabla_{\w}\E(\w))=\Hb(\w)\nabla_{\w}\E(\w).
\end{equation}
Hence the game is a generalized potential game.
\end{proof}
\begin{remark}
	At this point, we remark that the condition in Lemma \ref{lem: symmetrizable matrix} is also a characterization of the detailed balance condition. Indeed, consider a first order reaction network of $n$ chemical species $S_1, \ldots, S_n$ which react through the following first order reactions
	\begin{equation*}
		S_i \underset{a_{ji}}{\overset{a_{ij}}{\leftrightharpoons}} S_j,
	\end{equation*}
	where $a_{ij}, a_{ji} \geq 0$ are reaction rate constants. Then the condition given in Lemma \ref{lem: symmetrizable matrix} is equivalent to the fact that this first order reaction network satisfies the detailed balance condition \cite{Hearon1953}.
\end{remark}
\begin{remark}
We notice that \eqref{eq: generalized relation} is a generalization of \eqref{eq: M} where the constant (and diagonal) matrix $M^{-1}$ is replaced by $\Hb(\w)$ which is dependent on $\w$. It should be mentioned that the pair $(\E, \Psi)$ is not uniquely determined. Another choice, which arises from large-deviation theory of interacting systems, is given by \cite{Mielke2014, MPR2016}:
\begin{align*}
\bar{\E}(\w)&:=\frac{1}{2}\sum_{i=1}^n w_i\Big(\log\big(\frac{w_i}{{w}_{i\infty}}\big)-1\Big),\\
\bar{\Psi}^\ast(\w,\pmb{\mu})&=\sum_{r=1}^R \Psi_r^\ast(\w,\pmb{\mu}),\quad \Psi_r^\ast(\w,\pmb{\mu})=2\kappa_r\w^{\frac{1}{2}\pmb{\alpha}^r+\frac{1}{2}\pmb{\beta}^r}\Big(\mathrm{cosh}\big(\pmb{\alpha}^r-\pmb{\beta}^r\big)\cdot\pmb{\mu}-1\Big).
\end{align*}
Note that $\bar{\E}=\frac{1}{2}\E$, and $\bar{\Psi}^\ast$ is a non-quadratic potential function.
\end{remark}

\medskip
The equilibrium of the loss functions defined in \eqref{eq: loss1}--\eqref{eq: lossn} agrees with the detailed balance equilibrium of the chemical reaction network presented in \eqref{Reactions}. Therefore, finding the equilibrium for the game defined in \eqref{eq: loss1}--\eqref{eq: lossn} relates strongly to the trend to equilibrium of the evolution system \eqref{eq: CRE}. The latter issue is well studied thanks to the recent advances concerning the large time behavior of chemical reaction networks.
The following exponential convergence to equilibrium for detailed balanced systems was proved in \cite{DFT17}\footnote{The results in \cite{DFT17} in fact covers a more general class called {\it complex balanced systems}.}.
\begin{proposition}\label{convergence}
	Assume that the chemical reaction network \eqref{Reactions} is detailed balanced and possesses no boundary equilibria. Then, for each positive initial data, the corresponding solution converges exponentially to the detailed balanced equilibrium within the same compatibility class.
\end{proposition}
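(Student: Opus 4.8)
The plan is to prove Proposition~\ref{convergence} by the \emph{entropy--entropy production} method, i.e.\ the Bakry--Emery strategy alluded to in the introduction. Fix positive initial data $\mathbf{c}(0)$ and let $\mathbf{c}_\infty$ denote the unique positive detailed balanced equilibrium in its compatibility class, furnished by Proposition~\ref{pro:equilibrium}. The natural Lyapunov functional is the relative Boltzmann entropy (free energy)
\begin{equation*}
E(\mathbf{c}\,|\,\mathbf{c}_\infty) = \sum_{i=1}^{K}\Big(c_i\log\frac{c_i}{c_{i\infty}} - c_i + c_{i\infty}\Big),
\end{equation*}
which is nonnegative, strictly convex, and vanishes only at $\mathbf{c}=\mathbf{c}_\infty$; note its close relation to the potential $\E$ of Theorem~\ref{thm: Example generalised game}. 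The three steps are: (i) compute the entropy dissipation along \eqref{eq: CRE} and show it is nonnegative; (ii) establish an entropy--entropy production inequality $D(\mathbf{c}) \geq \lambda\, E(\mathbf{c}\,|\,\mathbf{c}_\infty)$ with $\lambda>0$ uniform over the compatibility class; and (iii) close the argument by Gr\"onwall's lemma and a Csisz\'ar--Kullback--Pinsker inequality.

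First I would differentiate $E$ along the flow. Using $\dot{\mathbf{c}}=-\Rb(\mathbf{c})$, the identity $\nabla_{\mathbf{c}}E = \log(\mathbf{c}/\mathbf{c}_\infty)$, and the detailed balance relation \eqref{equilibrium} in the form $\krfw \mathbf{c}_\infty^{\pmb{\alpha}^r}=\krbw \mathbf{c}_\infty^{\pmb{\beta}^r}$, a direct computation yields
\begin{equation*}
\frac{d}{dt}E(\mathbf{c}(t)\,|\,\mathbf{c}_\infty) = -D(\mathbf{c}(t)), \qquad D(\mathbf{c}) := \sum_{r=1}^{R}\Big(\krfw\mathbf{c}^{\pmb{\alpha}^r}-\krbw\mathbf{c}^{\pmb{\beta}^r}\Big)\log\frac{\krfw\mathbf{c}^{\pmb{\alpha}^r}}{\krbw\mathbf{c}^{\pmb{\beta}^r}}.
\end{equation*}
Nonnegativity of $D$ follows termwise from the elementary inequality $(a-b)(\log a-\log b)\geq 0$, so $E$ is nonincreasing. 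Combined with the conservation law $\mathbf{c}(t)\in(\mathbf{c}(0)+\mathrm{range}(W))\cap\R_+^K$, this confines the trajectory to a fixed compact sublevel set of $E$, providing the a priori bounds needed below.

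The heart of the proof, and the main obstacle, is the entropy--entropy production inequality
\begin{equation*}
D(\mathbf{c}) \geq \lambda\, E(\mathbf{c}\,|\,\mathbf{c}_\infty) \qquad \text{for all } \mathbf{c}\in(\mathbf{c}(0)+\mathrm{range}(W))\cap\R_+^K,
\end{equation*}
with $\lambda>0$ depending only on the compatibility class. Essentially all of the difficulty lies here, because $D$ controls only the $R$ reaction--rate differences $\krfw\mathbf{c}^{\pmb{\alpha}^r}-\krbw\mathbf{c}^{\pmb{\beta}^r}$, whereas $E$ measures the full deviation $\mathbf{c}-\mathbf{c}_\infty$; the two are linked solely through the stoichiometric vectors $\pmb{\alpha}^r-\pmb{\beta}^r$ and the linear conservation laws encoded by $\ker W^\top$. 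I would argue by contradiction: if $\inf D(\mathbf{c})/E(\mathbf{c}\,|\,\mathbf{c}_\infty)=0$ over the class, a minimizing sequence either converges (the entropy sublevel sets being compact) or degenerates. Away from $\mathbf{c}_\infty$ the ratio is bounded below by continuity, since the only interior zero of $D$ is the equilibrium and, crucially, the hypothesis of \emph{no boundary equilibria} excludes additional zeros of $D$ on $\partial\R_+^K$; near $\mathbf{c}_\infty$ a Taylor expansion reduces the inequality to a spectral gap estimate for the linearized system, which is strictly positive because detailed balance renders the linearization symmetric and the conservation laws fix the equilibrium uniquely within the class. Together these regimes produce a uniform constant $\lambda>0$.

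Finally, combining the dissipation identity with the entropy--entropy production inequality gives the differential inequality $\frac{d}{dt}E(\mathbf{c}(t)\,|\,\mathbf{c}_\infty)\leq -\lambda\,E(\mathbf{c}(t)\,|\,\mathbf{c}_\infty)$, whence Gr\"onwall's lemma yields the exponential entropy decay $E(\mathbf{c}(t)\,|\,\mathbf{c}_\infty)\leq E(\mathbf{c}(0)\,|\,\mathbf{c}_\infty)\,e^{-\lambda t}$. A Csisz\'ar--Kullback--Pinsker inequality of the form $\|\mathbf{c}-\mathbf{c}_\infty\|_{1}^2 \leq C\,E(\mathbf{c}\,|\,\mathbf{c}_\infty)$ on the relevant compact set then converts this into the stated exponential convergence $\|\mathbf{c}(t)-\mathbf{c}_\infty\|_1 \leq C\,e^{-\lambda t/2}$ to the detailed balanced equilibrium within the same compatibility class. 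I expect the inequality of step~(iii) to absorb nearly all of the technical effort, the dissipation computation and the final Gr\"onwall/CKP step being routine.
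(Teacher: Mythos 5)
Your outline is correct and is, in substance, the argument of the cited reference: the paper itself does not prove Proposition~\ref{convergence} but attributes it to \cite{DFT17}, and Remark~\ref{remark:explicit}(ii) explicitly describes that proof as resting on a contradiction argument with a non-explicit rate --- which is exactly your step (ii), where the entropy--entropy production constant $\lambda$ is produced by compactness of the entropy sublevel sets within the compatibility class, the no-boundary-equilibria hypothesis (to exclude zeros of $D$ on $\partial\R_+^K$), and a spectral-gap linearization near the unique positive equilibrium. One terminological correction and one genuine methodological contrast are worth recording. What you describe is the \emph{direct} entropy--entropy production method (prove $D \geq \lambda E$ and apply Gr\"onwall once), not the Bakry--Emery strategy, despite your opening sentence. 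The Bakry--Emery strategy, as the paper actually deploys it in Appendix~\ref{appendix} for the single reversible reaction of Proposition~\ref{pro:explicit}, differentiates the dissipation a second time, establishes $\frac{d}{dt}D \leq -\Lambda(a,b)\,D$ with $\Lambda$ bounded below by an explicitly computable $\lambda$ via the conservation laws, and then integrates the inequality on $(t,\infty)$ to recover $D \geq \lambda E$ with that same explicit $\lambda$; the payoff is a computable rate, which your compactness argument cannot deliver. So your route buys generality (arbitrary detailed balanced networks, which is what Proposition~\ref{convergence} asserts) at the price of an inexplicit $\lambda$, while the paper's appendix buys explicitness at the price of restricting to a single reaction. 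Two small cautions on your step (ii): on a compact sublevel set every minimizing sequence has a convergent subsequence, so the ``or degenerates'' alternative is vacuous; and the termwise dissipation $(k^r_{\textrm{fw}}\mathbf{c}^{\pmb{\alpha}^r}-k^r_{\textrm{bw}}\mathbf{c}^{\pmb{\beta}^r})\log\frac{k^r_{\textrm{fw}}\mathbf{c}^{\pmb{\alpha}^r}}{k^r_{\textrm{bw}}\mathbf{c}^{\pmb{\beta}^r}}$ is only lower semicontinuous (with values in $[0,+\infty]$) as $\mathbf{c}$ approaches the boundary, so the ``by continuity'' phrase should be replaced by lower semicontinuity of $D$ on the compact set; with that adjustment the infimum away from $\mathbf{c}_\infty$ is indeed attained and positive.
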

\begin{remark}\label{remark:explicit}\hfil\
	\begin{itemize}
		\item[(i)] 
		The convergence to equilibrium for a system {\it with} boundary equilibria is related to the Global Attractor Conjecture \cite{Craciun}, which is still a famous open problem in chemical reaction network theory.
%		\item[(ii)] 
		
		\item[(ii)] The exponential convergence to equilibrium in Proposition \ref{convergence} was obtained via a contradiction argument, and therefore the convergence rate is not explicit. Up to our knowledge, the explicit convergence rate for \eqref{Reactions} is in general unknown. Nevertheless, for a single reversible reaction with an arbitrary number of chemical species, the convergence rate can be calculated explicitly. We demonstrate that result in Proposition \ref{pro:explicit} below.
		
		\item[(ii)] If we consider the spatial diffusion for each species in \eqref{Reactions}, then we obtain a reaction-diffusion system instead of a differential system. The proof of equilibration in this case is much more involved. We refer the interested reader to \cite{FT18}.
		%In the next Lemma, we show the explicit convergence to equilibrium for a single reversible reaction.
	\end{itemize}
\end{remark}

\begin{proposition}\label{pro:explicit}
	Let $\alpha_1, \ldots, \alpha_m, \beta_1, \ldots, \beta_n\in [1,\infty)$ and consider the reversible reaction
	\begin{equation}\label{reversible}
		\alpha_1 A_1 + \alpha_2 A_2 + \ldots + \alpha_mA_m \underset{k_f}{\overset{k_b}{\leftrightharpoons}} \beta_1 B_1 + \beta_2 B_2 + \ldots + \beta_nB_n.
	\end{equation}
	Denote by $a_i(t)$ and $b_j(t)$, $i\in \{1,\ldots, m\}$ and $j\in \{1,\ldots, n\}$ the concentrations of $A_i$ and $B_j$ at time $t>0$ respectively. Then for any positive initial data $(a_0,b_0)\in (0,\infty)^{m+n}$, the concentrations converge exponentially to equilibrium with {\normalfont computable rates}, i.e.
	\begin{equation*}
		\sum_{i=1}^m|a_i(t) - a_{i\infty}|^2 + \sum_{j=1}^n|b_j(t) - b_{j\infty}|^2 \leq C_0e^{-\lambda t}
	\end{equation*}
	where the constants $C_0$ and $\lambda$ can be {\normalfont explicitly computed.}
\end{proposition}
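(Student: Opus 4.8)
The plan is to run the entropy method in its Bakry--Emery (second--derivative) form, exploiting the fact that a single reaction has exactly one reaction coordinate. First I would reduce the system to that coordinate. Writing $u:=k_b\,\mathbf{a}^{\pmb{\alpha}}$ and $v:=k_f\,\mathbf{b}^{\pmb{\beta}}$ for the two one--directional fluxes (with $\mathbf{a}^{\pmb{\alpha}}=\prod_i a_i^{\alpha_i}$, $\mathbf{b}^{\pmb{\beta}}=\prod_j b_j^{\beta_j}$), the reaction rate equations read $\dot a_i=-\alpha_i(u-v)$ and $\dot b_j=\beta_j(u-v)$, so $(\dot{\mathbf a},\dot{\mathbf b})$ is always parallel to $(\pmb{\alpha},-\pmb{\beta})$ and the motion is confined to the one--dimensional compatibility class through the positive initial datum. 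Parametrising this class by a scalar extent $s$, the dynamics become an autonomous scalar ODE $\dot s=R(s)$, where $R$ is \emph{strictly decreasing} (the $A_i$ are consumed and the $B_j$ produced as $s$ grows), hence has the unique interior zero $s_\infty$ corresponding to the detailed balanced equilibrium of Proposition~\ref{pro:equilibrium}. A scalar autonomous ODE is monotone, so $s(t)$ moves monotonically from $s(0)$ to $s_\infty$; consequently every concentration is monotone in $t$ and stays trapped between its initial and its equilibrium value. This gives the uniform bounds $0<\underline c\le a_i(t),b_j(t)\le\overline c$ together with $\mathbf{a}(t),\mathbf{b}(t)\to\mathbf{a}_\infty,\mathbf{b}_\infty$, which is exactly what makes all the constants below explicit (and is consistent with Proposition~\ref{convergence}).

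Next I introduce the relative free energy, which is precisely the potential $\E$ of the generalized potential game structure of Theorem~\ref{thm: Example generalised game}, namely $\E=\sum_i a_i(\log(a_i/a_{i\infty})-1)+\sum_j b_j(\log(b_j/b_{j\infty})-1)$ (up to the additive constant making it nonnegative). Using $\dot a_i=-\alpha_i(u-v)$, $\dot b_j=\beta_j(u-v)$ and the detailed balance identity $u_\infty=v_\infty$, a direct computation yields
\begin{equation*}
-\frac{d}{dt}\E=(u-v)\log\frac{u}{v}=:\mathcal{D}\ge 0,
\end{equation*}
the nonnegativity being the elementary inequality $(x-y)(\log x-\log y)\ge 0$.

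The Bakry--Emery step is to differentiate $\mathcal{D}$ once more along the flow. Setting $A:=\sum_i\alpha_i^2/a_i$ and $B:=\sum_j\beta_j^2/b_j$, one finds $\dot u=-u(u-v)A$ and $\dot v=v(u-v)B$, and hence
\begin{equation*}
\frac{d}{dt}\mathcal{D}=-(uA+vB)\,(u-v)\log\frac{u}{v}-(A+B)(u-v)^2\le-(uA+vB)\,\mathcal{D}.
\end{equation*}
The uniform bounds on the concentrations from the first step force $uA+vB\ge\lambda$ for an explicit $\lambda>0$, so $\dot{\mathcal{D}}\le-\lambda\mathcal{D}$. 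Since $\E(t),\mathcal{D}(t)\to 0$, integrating this from $t$ to $\infty$ and using $\E(t)=\int_t^\infty\mathcal{D}(s)\,ds$ produces the entropy--entropy-dissipation inequality $\mathcal{D}\ge\lambda\E$. Gronwall's lemma then gives $\E(t)\le\E(0)e^{-\lambda t}$, and a Csisz\'ar--Kullback--Pinsker inequality (valid because the trajectory lives in a bounded compatibility class) converts this into the stated estimate, $\sum_i|a_i-a_{i\infty}|^2+\sum_j|b_j-b_{j\infty}|^2\le\|\mathbf{c}-\mathbf{c}_\infty\|_1^2\le C_{\mathrm{CKP}}\,\E\le C_0e^{-\lambda t}$.

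The main obstacle is making the third step quantitative: securing a strictly positive, explicit lower bound on $uA+vB$ along the \emph{entire} trajectory, equivalently keeping the flow uniformly away from $\partial\mathbb{R}_+^{m+n}$ where $\mathcal D$ degenerates. This is exactly where the single--reaction structure is essential, since the monotonicity of the scalar extent confines each concentration to the interval between its initial and equilibrium value, yielding $\underline c,\overline c$ in terms of $(a_0,b_0)$, the equilibrium and the rate constants; threading these bounds through the computation of $\dot{\mathcal{D}}$ and through the Csisz\'ar--Kullback--Pinsker constant is what makes $\lambda$ and $C_0$ computable. I would note, finally, that the same monotonicity furnishes a short elementary alternative: on $\dot s=R(s)$ one has $R(s)(s-s_\infty)\le-c\,|s-s_\infty|^2$ with $c=-\sup_{s\neq s_\infty}R(s)/(s-s_\infty)>0$ finite on the compact range, and since each concentration is affine in $s$ this already gives exponential $L^2$ decay; I nevertheless prefer the entropy route, as it is the one that carries over beyond a single reaction.
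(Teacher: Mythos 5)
Your proposal is correct and follows essentially the same Bakry--Emery route as the paper: the relative entropy $E[a,b|a_\infty,b_\infty]$, its dissipation $D[a,b]=(a^\alpha-b^\beta)\log(a^\alpha/b^\beta)$, the second differentiation giving $\frac{d}{dt}D\le -\Lambda(a,b)\,D$ with $\Lambda=a^\alpha\sum_i\alpha_i^2/a_i+b^\beta\sum_j\beta_j^2/b_j$, integration from $t$ to $\infty$ to convert the decay of $D$ into the functional inequality $D\ge\lambda E$, Gronwall, and a Csisz\'ar--Kullback--Pinsker step (the paper's Lemma \ref{lem:entropy}, proved via $\varphi(z)\ge(\sqrt z-1)^2$) to reach the quadratic distance. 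The one point where you genuinely diverge is the crux of the argument, namely the explicit lower bound $\Lambda\ge\lambda>0$: you derive it by reducing to the scalar extent variable $s$, noting that $\dot s=R(s)$ with $R$ strictly decreasing, so each concentration is monotone in time and trapped between its initial and its equilibrium value, which yields two-sided uniform bounds; this is correct for a single reversible reaction. The paper instead makes a case distinction using only the pairwise conservation laws $a_i/\alpha_i+b_j/\beta_j=M_{ij}$ (either some $a_{i_0}$ is small, forcing all $b_j$ to be bounded below so that the $b$-term in $\Lambda$ dominates, or all $a_i$ are bounded below and the $a$-term does), which produces a $\lambda$ (formula \eqref{lam_def}) expressed directly in the stoichiometric coefficients and initial masses, without reference to the only implicitly defined equilibrium and without invoking monotonicity of the trajectory. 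Your variant is equally valid here, though its constants pass through $(a_\infty,b_\infty)$ and the monotonicity argument is tied even more tightly to the single-reaction (scalar ODE) structure; your closing observation that the scalar ODE already gives exponential decay by an elementary one-dimensional estimate is also correct, and, as you note, it is precisely the entropy version that has a chance of surviving beyond one reaction.
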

\begin{proof}
	The proof utilises the Bakry-Emery strategy, and we postpone it to Appendix \ref{appendix} in order to not interrupt the train of thought.
	
	\medskip
	We emphasize that although the convergence to equilibrium can be obtained by a contradiction argument (see e.g. \cite[Proposition 2.3]{DFT17}), the proof based on the Bakry-Emery method in Appendix \ref{appendix} has the advantage of providing an {\it explicit} rate of the convergence. Moreover, we believe that the convergence of a {\it discrete} version of this method leads to the convergence of algorithms finding equilibrium of the games defined in \eqref{eq: loss1}--\eqref{eq: lossn}. We leave this interesting point for future investigation.
\end{proof}

\section{Numerical investigations}\label{sec:numerics}
%\subsection{Calculation of chemical equilibrium}
In this section we apply the projected steepest gradient descent method to numerically calculate the Nash equilibrium of generalized potential games arising from some specific chemical reactions. Moreover, we compare the behavior of the numerical schemes with the evolution of the chemical systems to see their close relation.

\medskip
Consider a general reversible chemical reaction 
\begin{equation}
\label{eq: general chemical reaction}
\alpha_1^r X_1+\ldots+\alpha^r_K X_K\xrightleftharpoons[k^r_{\textrm{bw}}]{\,k^r_{\textrm{fw}}\,} \beta_1^r X_1+\ldots+\beta^r_K X_K
\end{equation}
for $r=1,\ldots, R$. Suppose that the detailed balance condition \eqref{equilibrium} is satisfied. According to Proposition \ref{pro:equilibrium}, given an initial state, the reversible chemical reaction \eqref{eq: general chemical reaction} possesses a unique positive equilibrium state in the same compatibility class. To take into account the conservation of mass, we will apply the projected gradient descent method, see for instance \cite{Calamai1987,snyman2018practical}. Below we consider three concrete examples.

\medskip
\textbf{Example 1} ({\it A single reversible reaction}) Consider the following chemical reaction:
\begin{equation}
2NaCl+CaCO_{3}\rightleftharpoons Na_{2}CO_{3}+CaCl_{2}\label{eq:na2co3}
\end{equation}
Let $w_{1}=[NaCl],w_{2}=[CaCO_{3}],w_{3}=[Na_{2}CO_{3}],w_{4}=[CaCl_{2}]$. According to the law of mass action we obtain the following system of differential equations:
\begin{align*}
\dot{w}_{1} & =-2(k_{+}w_{1}^{2}w_{2}-k_{-}w_{3}w_{4}),\quad
\dot{w}_{2} =-(k_{+}w_{1}^{2}w_{2}-k_{-}w_{3}w_{4})\\
\dot{w}_{3} & =(k_{+}w_{1}^{2}w_{2}-k_{-}w_{3}w_{4}),\quad
\dot{w}_{4} =(k_{+}w_{1}^{2}w_{2}-k_{-}w_{3}w_{4}).
\end{align*}
The loss functions are given by
\begin{align*}
\ell_{1} & =2(\frac{1}{3} k_{+}w_{1}^{3}w_{2}-k_{-}w_{1}w_{3}w_{4}),\quad
\ell_{2} =(\frac{1}{2} k_{+} w_{1}^{2}w_{2}^{2}-k_{-}w_{2}w_{3}w_{4})\\
\ell_{3} & =-(k_{+}w_{1}^{2}w_{2}w_{3}-\frac{1}{2} k_{-}w_{3}^{2}w_{4}),\quad
\ell_{4} =-(k_{+}w_{1}^{2}w_{2}w_{4}-\frac{1}{2} k_{-}w_{3}w_{4}^{2}).
\end{align*}
The potential function is given by:
\begin{align*}
\mathcal{E}&=\sum_{i=1}^{4}w_i\Big(\log(w_i/w_{i\infty})-1\Big).
\end{align*}
Conservation of mass:
\begin{align*}
&w_{1}+2w_{3}  =C_{1}, && w_{1}+2w_{4}  =C_{2}\\
&w_{2}+w_{4}  =C_{3}, && w_{2}+w_{3}  =C_{4}
\end{align*}
Using the projected gradient method, the concentrations of all chemical
species in the corresponding chemical reaction converge to a close
neighborhood of the equilibrium point (Fig.\ref{chemical} (a,b)). In this case, the 
gradient vector is approximately 0, hence the KKT point is attained
(within certain numerical error), and all quantities remain constant
near equilibrium. 
\begin{figure}

\subfloat[$CaCO_{3}$ example]{\includegraphics[scale=0.5]{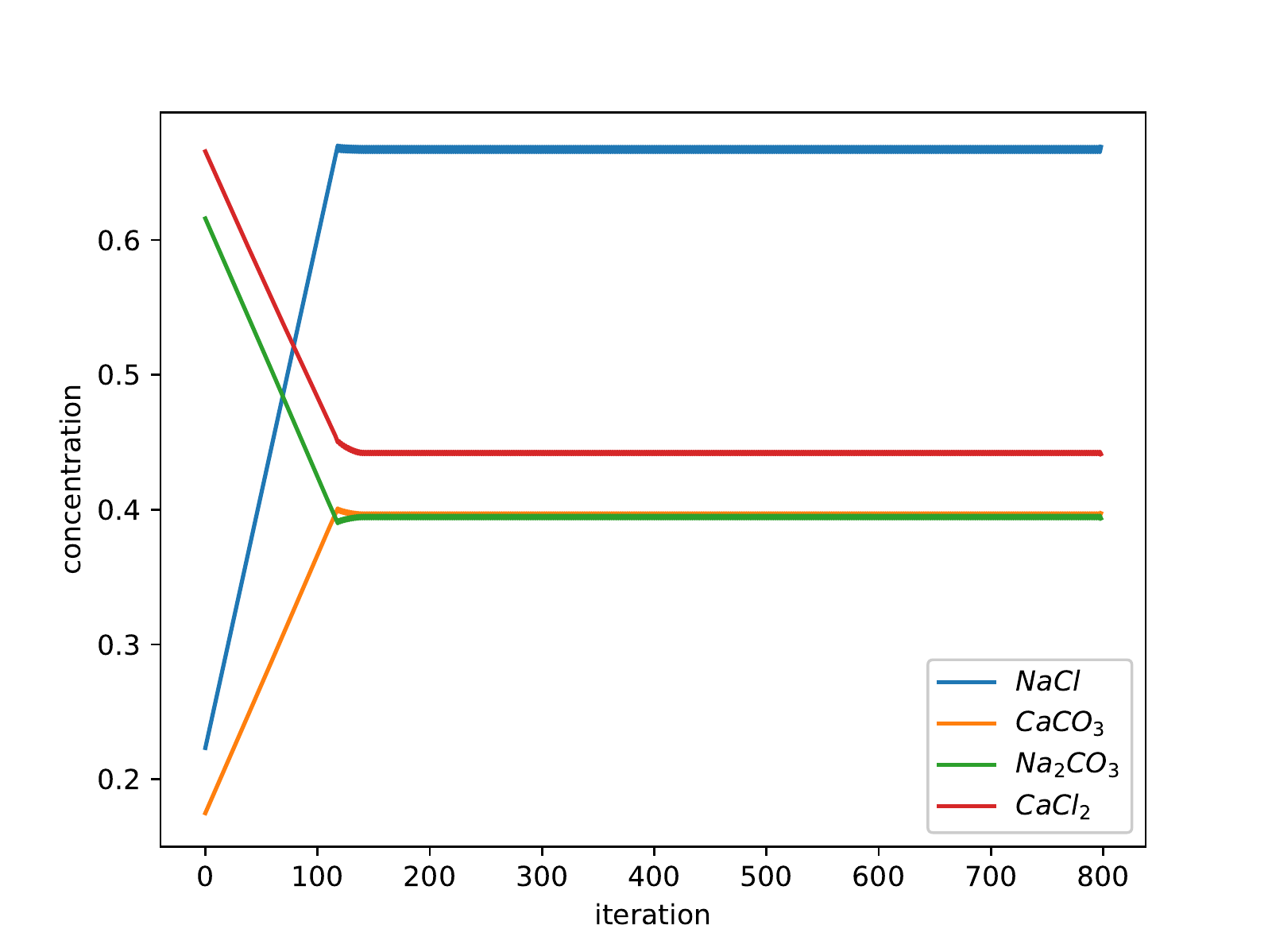}\includegraphics[scale=0.5]{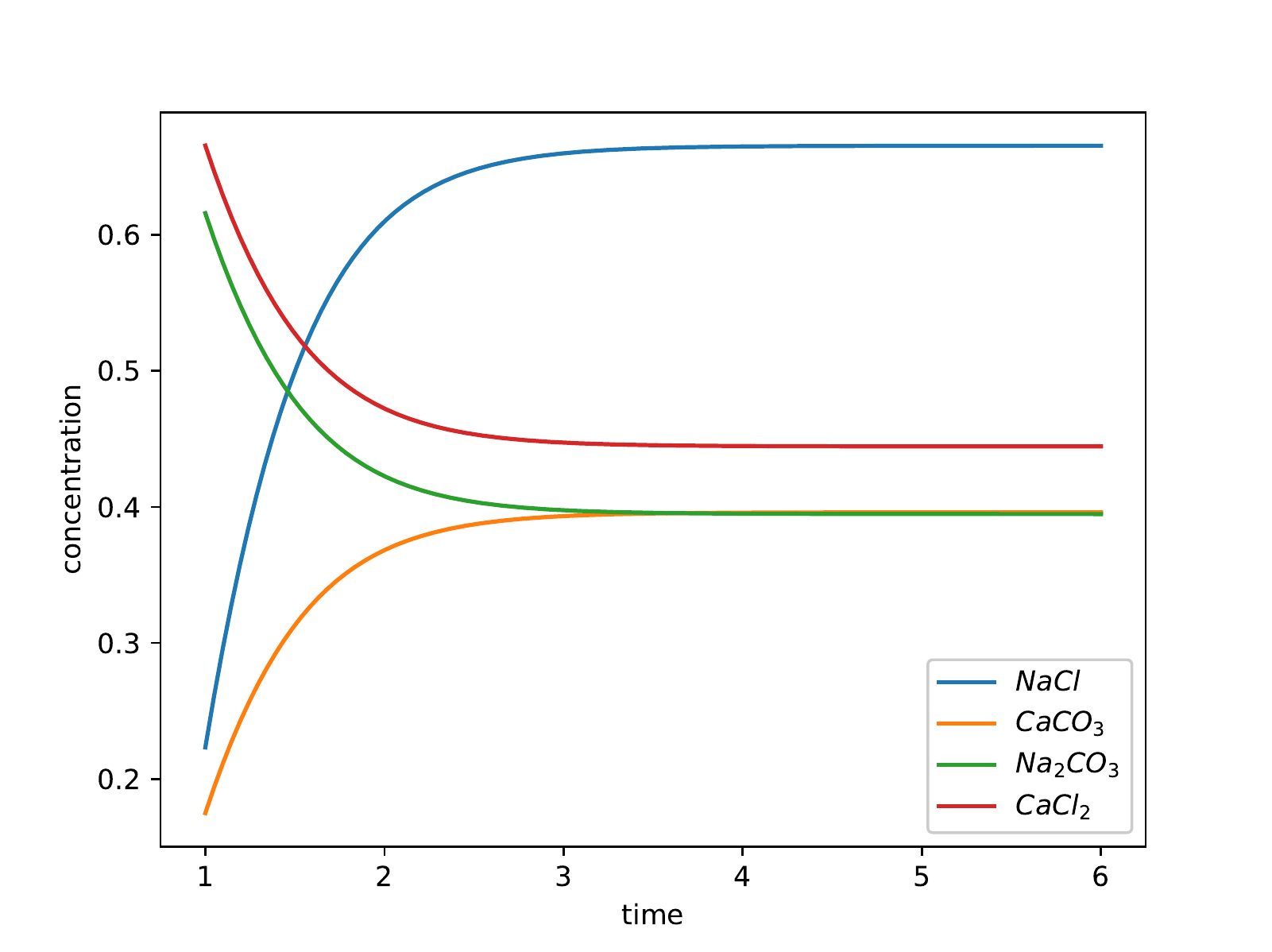}

}

\subfloat[Combustion engine example]{\includegraphics[scale=0.5]{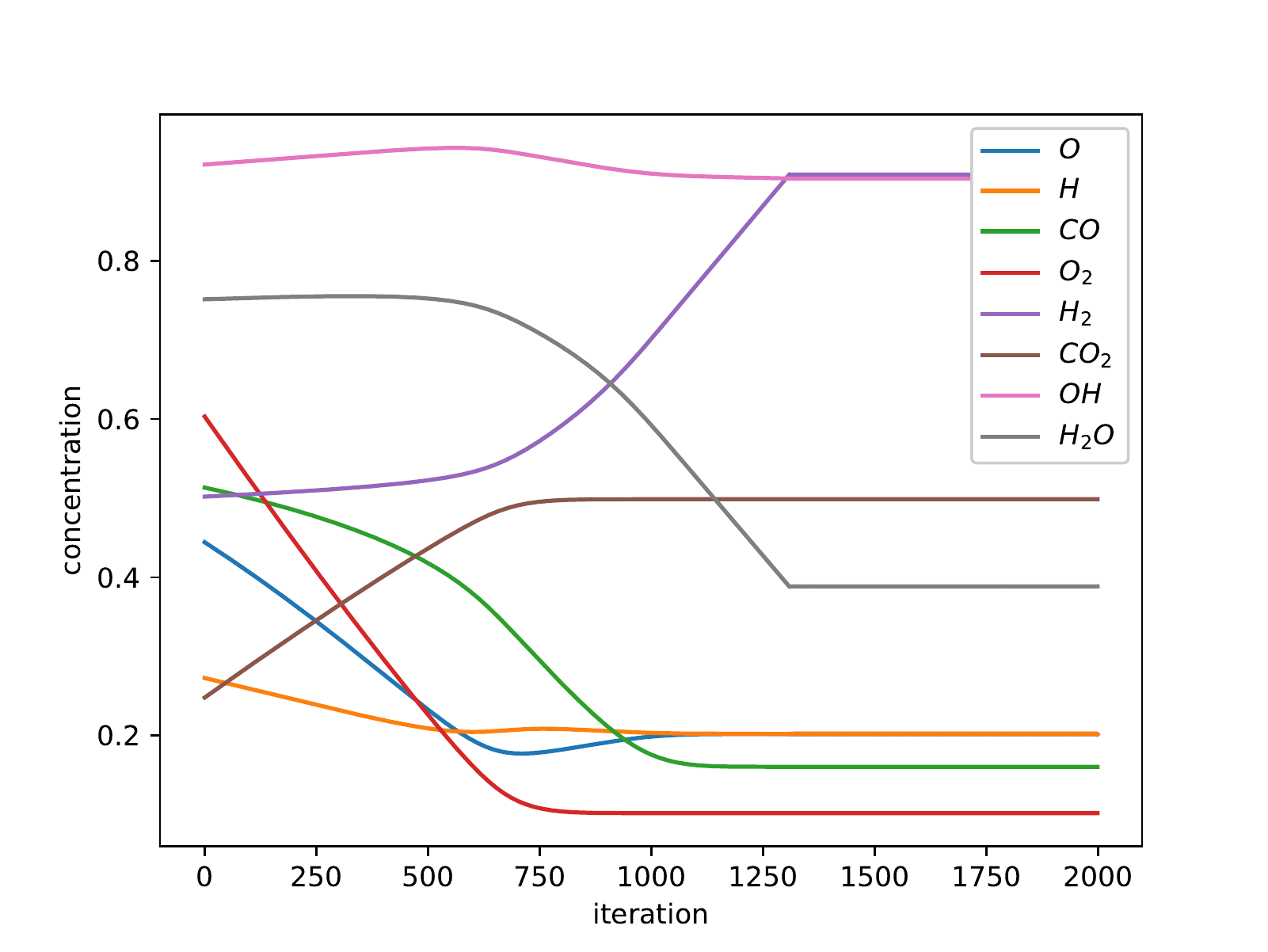}\includegraphics[scale=0.5]{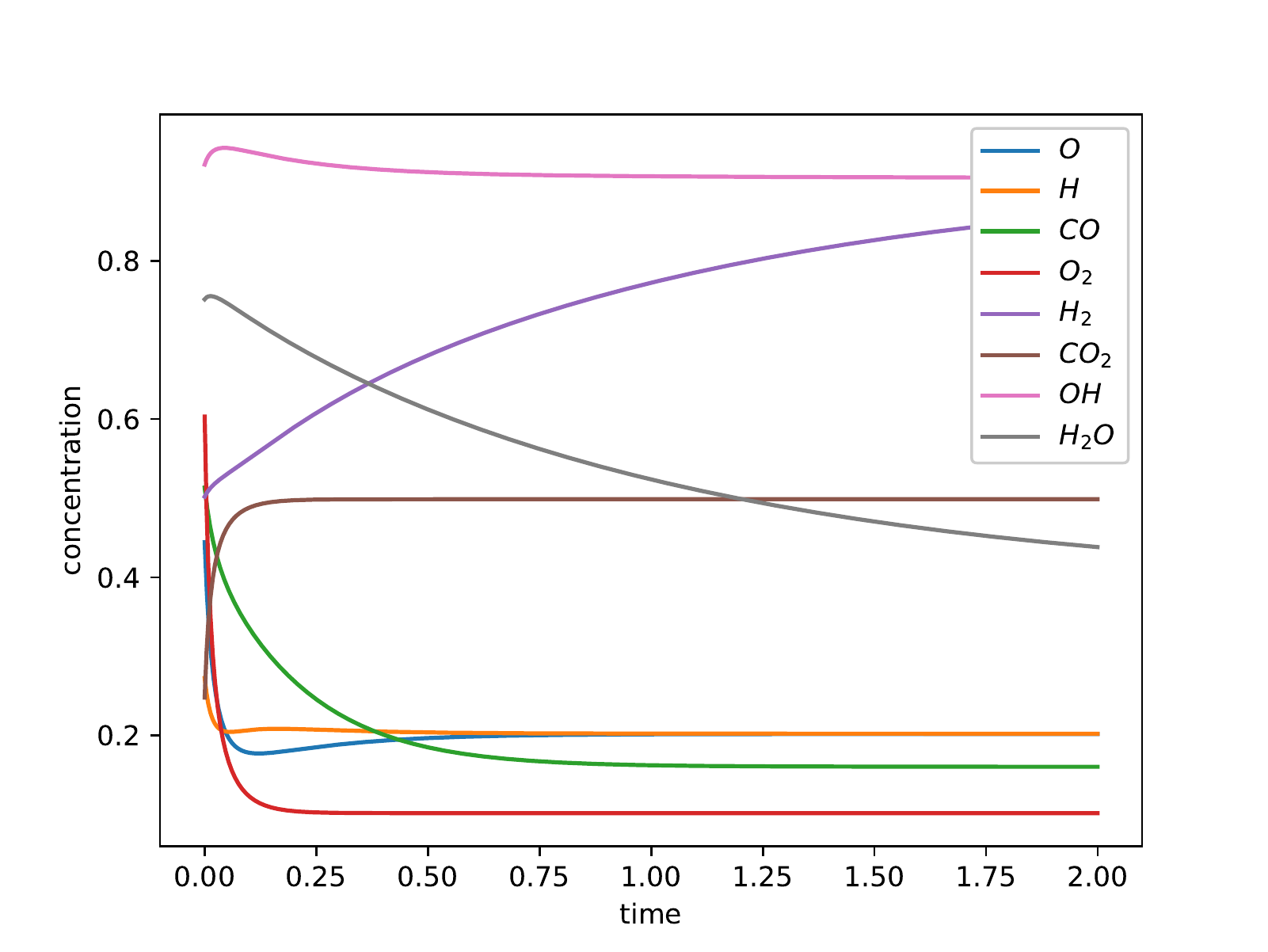}}\caption{Comparing convergence of projected gradient method and numerical solution
for different chemical reactions satisfying mass conservation law}
\label{chemical}
\end{figure}\vspace{0.1 cm}

\textbf{Example 2} ({\it A network of reversible reactions}) The example below is from combustion chemistry and models the burning of fuel in a combustion chamber such as in a car engine.
$$
O_2 \xrightleftharpoons[\, k_2\,]{\,k_1\,} 2O,~~H_2\xrightleftharpoons[\, k_4\,]{\,k_3\,} 2H,~~ N_2 \xrightleftharpoons[\, k_6\,]{\,k_5\,}2N,~~ CO_2 \xrightleftharpoons[\, k_8\,]{\,k_7\,} O+CO,~~ OH \xrightleftharpoons[\, k_{10}\,]{\,k_9\,}  O+H,~~H_2O \xrightleftharpoons[\, k_{12}\,]{\,k_{11}\,} O+2H.
$$
Let
\begin{align*}
&w_1=[O],~w_2=[H],~w_3=[CO],~w_4=[OH],~w_5=[O_2],\\
&~w_6=[H_2],~
~w_7=[H_2O],~w_8=[CO_2],~w_9=[N],~w_{10}=[N_2].
\end{align*}
It is remarked that this system can be split into one system of $(w_1,w_2,w_3,w_4,w_5,w_6,w_7,w_8)$ and one of $(w_9,w_{10})$, which are decoupled. For brevity, we will consider only the former system, since the latter is simpler (and can be treated similarly to {\bf Example 1}).
According to the law of mass action we obtain the following system of differential equations
\begin{equation}\label{sys1}
\begin{aligned}
    \dot{w}_1&=2(k_1 w_5-k_2w_1^2)+(k_7 w_8-k_8w_1w_3)+(k_9 w_4-k_{10}w_1w_2)+(k_{11}w_{7}-k_{12}w_1w_2^2)\\
    \dot{w}_2&=2(k_3w_6-k_4w_2^2)+(k_9w_4-k_{10}w_1w_2)+2(k_{11}w_{7}-k_{12}w_1w_2^2)\\
    \dot{w}_3&=(k_7w_8-k_8 w_1 w_3),\quad
    \dot{w}_4=-(k_9w_4-k_{7}w_1w_2)\\
    \dot{w}_5&=-(k_1 w_5-k_2w_1^2),\quad
    \dot{w}_6=-(k_3w_6-k_4w_2^2)\\
    \dot{w}_{7}&=-(k_{11}w_{7}-k_{12}w_1 w_2^2),\quad
    \dot{w}_8=-(k_7w_8-k_8w_1w_3).
\end{aligned}
\end{equation}
System \eqref{sys1} has three linearly independent conservation laws, for all $t\geq 0$,
\begin{equation}\label{laws_sys1}
\begin{aligned}
& w_1(t) + w_{4}(t) + 2w_5(t) + w_{7}(t) + w_8(t)  = M_1\\
& w_2(t) + w_4(t) + w_6(t)  + 2w_{7}(t) = M_2,\\
& w_3(t) + w_8(t) = M_3.
\end{aligned}
\end{equation}
By direct computations, for positive initial masses $M_1, M_2, M_3 >0$, there exists a unique positive equilibrium to \eqref{sys1} satisfying the conservation laws \eqref{laws_sys1} and solving
\begin{equation}\label{equi_sys1}
\begin{aligned}
	& k_2w_{1\infty}^2 = k_1w_{5\infty}, && k_{10}w_{1\infty}w_{2\infty} = k_9w_{4\infty},\\
	& k_4w_{2\infty}^2 = k_3w_{6\infty}, && k_{12}w_{1\infty}w_{2\infty}^2 = k_{11}w_{7\infty},\\
	& k_8w_{1\infty}w_{3\infty} = k_7w_{8\infty}. &&
\end{aligned}
\end{equation}
Therefore, system \eqref{sys1} is detailed balanced. Moreover, there exists no boundary equilibrium (direct verification), and it follows immediately from Proposition \ref{convergence} that any solution to \eqref{sys1} converges exponentially to the detailed balance equilibrium in the same stoichiometric compatibility class. The loss functions in this case are computed as
\begin{align*}
& \ell_1= -2(k_1 w_5 w_1-\frac{1}{3}k_2 w_1^3)-(k_7w_1w_8-\frac{1}{2}k_8 w_1^2 w_3)\\
&\quad \quad -(k_9w_1 w_4-\frac{1}{2}k_{10}w_1^2 w_2)-(k_{11}w_1w_{7}-\frac{1}{2}k_{12}w_1^2 w_2^2)
\\ & \ell_2=-2(k_3w_6w_2-\frac{1}{3}k_4 w_2^3)-(k_9 w_2w_4-\frac{1}{2}k_{10}w_1w_2^2)-2(k_{11}w_{7}w_2-\frac{1}{3}k_{12}w_1w_2^3)\\
&\ell_3=-(k_7w_3w_8-\frac{1}{2}k_8 w_1 w_3^2),\quad
\ell_4=(\frac{1}{2}k_1 w_5^2-k_2 w_1^2 w_5)\\
& \ell_5=(\frac{1}{2}k_3w_6^2-k_4w_2^2w_6),\quad
\ell_6=(\frac{1}{2}k_7w_8^2-k_8w_1w_3w_8)\\
&\ell_7=(\frac{1}{2}k_9 w_4^2-k_{10}w_1w_2w_4),\quad
\ell_{8}=(\frac{1}{2}k_{11}w_{7}^2-k_{12}w_1w_2^2 w_{7}).
\end{align*}
%Then conservation of mass
%\begin{align*}
% &w_1+w_3+2w_5+2w_8+w_9+w_{10}=C_1\\
% & w_2+2w_6+w_9+2w_{10}=C_2\\
% &w_3+w_8=C_3\\
% &w_4+2 w_7=C_4.
%\end{align*}
%Detailed balanced condition: suppose that there exists $(\wb_1,\ldots,\wb_{10})\in \mathbb{R}_+^{10}$ such that
%\begin{align*}
%&k_1\wb_5=k_2\wb_1^2=\kappa_1\\
%& k_3\wb_6=k_4\wb_2^2 =\kappa_2\\
%&k_5\wb_7=k_6\wb_4^2=\kappa_3\\
%& k_7\wb_8=k_8 \wb_1 \wb_3=\kappa_4\\
%& k_9\wb_9=k_{10}\wb_1\wb_2=\kappa_5\\
%&k_{11}\wb_{10}=k_{12}\wb_{1}\wb_2^2=\kappa_6.
%\end{align*}
Under the detailed balance condition, the potential is given by
\begin{align*}
\mathcal{E}&=\sum_{i=1}^{8}w_i\Big(\log(w_i/\wb)-1\Big).
\end{align*}
\begin{align*}
&\Psi^\ast(w_1,\ldots,w_8;\mu_1,\ldots,\mu_8)\\&=\frac{\kappa_1}{2}\ell\Big(\frac{w_5}{w_{5\infty}},\frac{w_1^2}{w_{1\infty}^2}\Big)(\mu_5-2\mu_1)^2+\frac{1}{2}\kappa_2 \ell\Big(\frac{w_6}{w_{6\infty}},\frac{w_2^2}{w_{2\infty}^2}\Big)(\mu_6-2\mu_2)^2
\\&\qquad+\frac{1}{2}\kappa_4 \ell\Big(\frac{w_8}{w_{8\infty}},\frac{w_1w_3}{w_{1\infty}w_{3\infty}}\Big)(\mu_8-\mu_1-\mu_3)^2+ \frac{1}{2}\kappa_5 \ell\Big(\frac{w_4}{w_{4\infty}},\frac{w_1w_2}{w_{1\infty}w_{2\infty}}\Big)(\mu_4-\mu_1-\mu_2)^2\\
&\qquad +\frac{1}{2}\kappa_6 \ell\Big(\frac{w_{7}}{w_{7\infty}},\frac{w_1w_2^2}{w_{1\infty}w_{2\infty}^2}\Big)(\mu_{7}-\mu_1-2\mu_2)^2.
\end{align*}
%\begin{align*}
%&w_4=R_5 w_1 w_2,\\
%&w_5=R_1 w_1^2,\\
%&w_6=R_2w_2^2,\\
%&w_7=R_6 w_1w_2^2,\\
%&w_8=R_4w_1 w_3.
%\end{align*}

Similar to the above example, using projected gradient method, the concentrations of all chemical
species in the corresponding chemical reaction converge to a close
neighborhood of equilibrium point ( Fig.\ref{chemical}(c,d)). In most cases, we observe the oscillations
around the equilibrium point. This is due to the finite numerical error when approaching the optimal points of loss functions. Using a smaller time step will reduce oscillations.

This example and the above example show the validity of the projected
gradient method in finding equilibria of the chemical reactions.

\medskip

{\bf Example 3} ({\it A reaction network without conservation laws}) 
\begin{equation*}
\begin{tikzpicture} [baseline=(current  bounding  box.center)]
\node (a) {$S_1+S_2$} node (b) at (2,0) {$3S_1$} node (c) at (2,-2) {$2S_1+S_3$} node (d) at (0,-2) {$2S_2$};

\draw[arrows=->] ([xshift =0.5mm]a.east) -- ([xshift =-0.5mm]b.west);
\draw[arrows=->] ([xshift =-0.5mm]b.west) -- ([xshift =0.5mm]a.east);

\draw[arrows=->] ([yshift=0.5mm]d.north) -- ([yshift=-0.5mm]a.south);
\draw[arrows=->] ([yshift=-0.5mm]a.south) -- ([yshift=0.5mm]d.north) ;

\draw[arrows=->] ([yshift=-0.5mm]b.south) -- ([yshift =0.5mm]c.north);
\draw[arrows=->] ([yshift =0.5mm]c.north) -- ([yshift=-0.5mm]b.south);

\draw[arrows=->] ([xshift =0.5mm]c.west) -- ([xshift =-0.5mm]d.east);
\draw[arrows=->] ([xshift =-0.5mm]d.east) -- ([xshift =0.5mm]c.west);
\end{tikzpicture}
\end{equation*}
For simplicity, we assume that all reaction rate constants are one. Denote by $w_1, w_2, w_3$ the concentrations of $S_1, S_2, S_3$ respectively. We obtain, thanks to the law of mass action
\begin{equation}\label{ODE-sys}
	\begin{aligned}
	\dot{w}_1  &= w_1w_2 - 3w_1^3 - w_1^2w_3 + 3w_2^2,\\
	\dot{w}_2  &= w_1^3 + 2w_1^2w_3 - 3w_2^2,\\ %-w_1w_2 + 2w_1^2w_3 - w_2^2,\\
	\dot{w}_3  &= w_1^3 - 2w_1^2w_3 + w_2^2.
\end{aligned}
\end{equation}
This system does not have any conservation laws, and it therefore has for any initial data a unique positive detailed balance equilibrium $(w_{1\infty}, w_{2\infty}, w_{3\infty}) = (1,1,1)$. We remark that this system has infinitely many {\it boundary equilibria} of the form $(0,0,\alpha)$ for $\alpha > 0$. Thanks to the Global Attractor Conjecture in the case of one linkage class \cite{And11}, we have
\begin{equation*}
	\lim_{t\to\infty}(w_1(t), w_2(t), w_3(t)) = (w_{1\infty}, w_{2\infty}, w_{3\infty})
\end{equation*}
provided the initial data is positive. 
Moreover, one can even show that this convergence is exponential (see e.g. \cite[Proposition 2.3]{DFT17}). The loss functions in this case are computed as
\begin{equation}
\begin{aligned}\label{ls}
	& \ell_1 = \frac 12 w_1^2w_2 - \frac 34 w_1^4 - \frac 13 w_1^3 w_3 + 3w_1w_2^2,\\
	& \ell_2 = w_2w_1^3 + 2w_1^2w_2w_3 - w_2^3,\\
	& \ell_3 = w_1^3w_3 - w_1^2w_3^2 + w_2^2w_3.
\end{aligned}
\end{equation}

%\begin{figure}[h]
%\includegraphics[scale=0.7]{10_53.eps}
%\caption{Evolution of solution to \eqref{ODE-sys} on time interval $[0;30]$ with initial data $(1; 0.5; 3)$}
%\end{figure}

\begin{figure}
\subfloat[]{\includegraphics[scale=0.5]{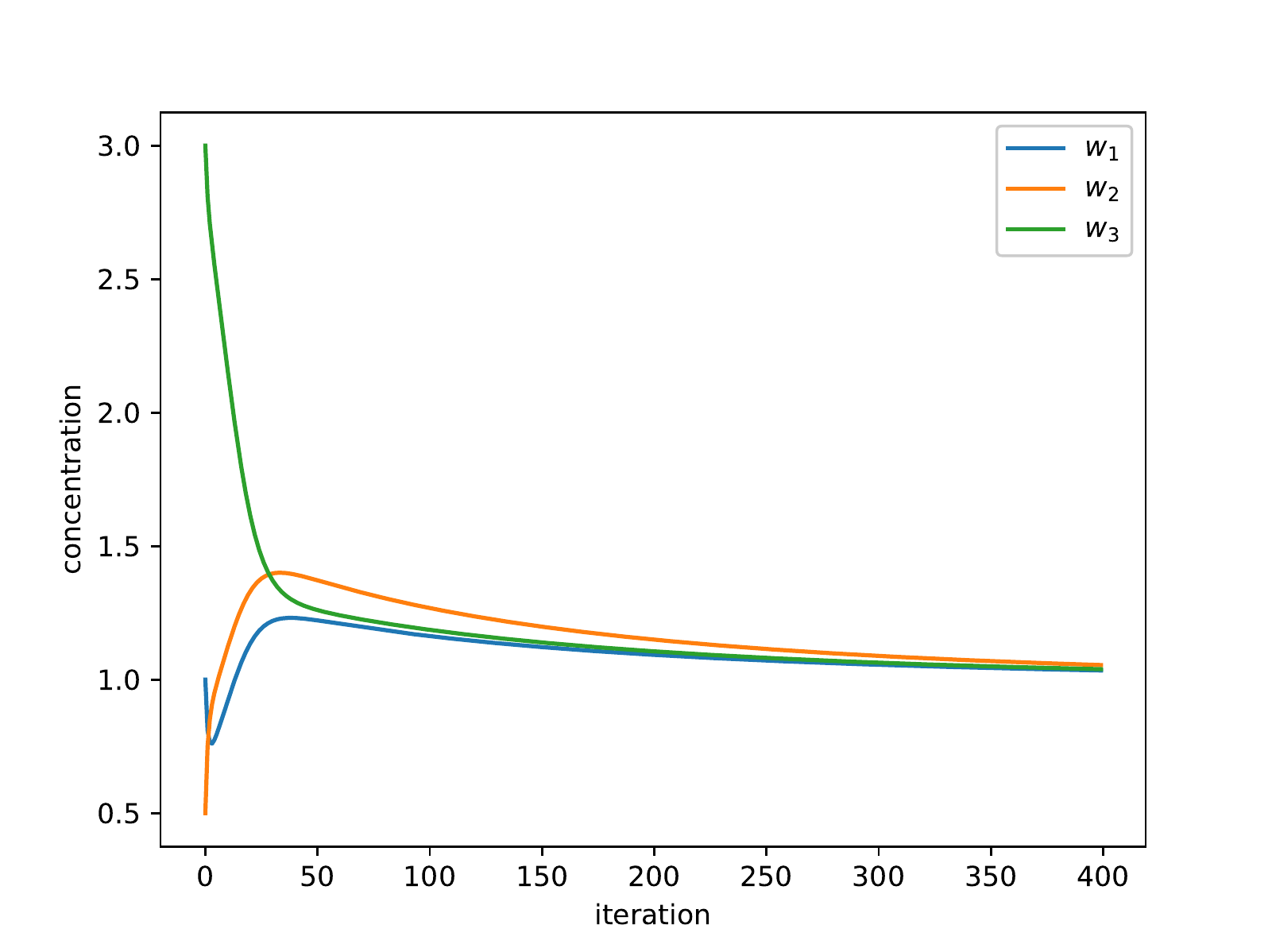}}
\subfloat[]{\includegraphics[scale=0.5]{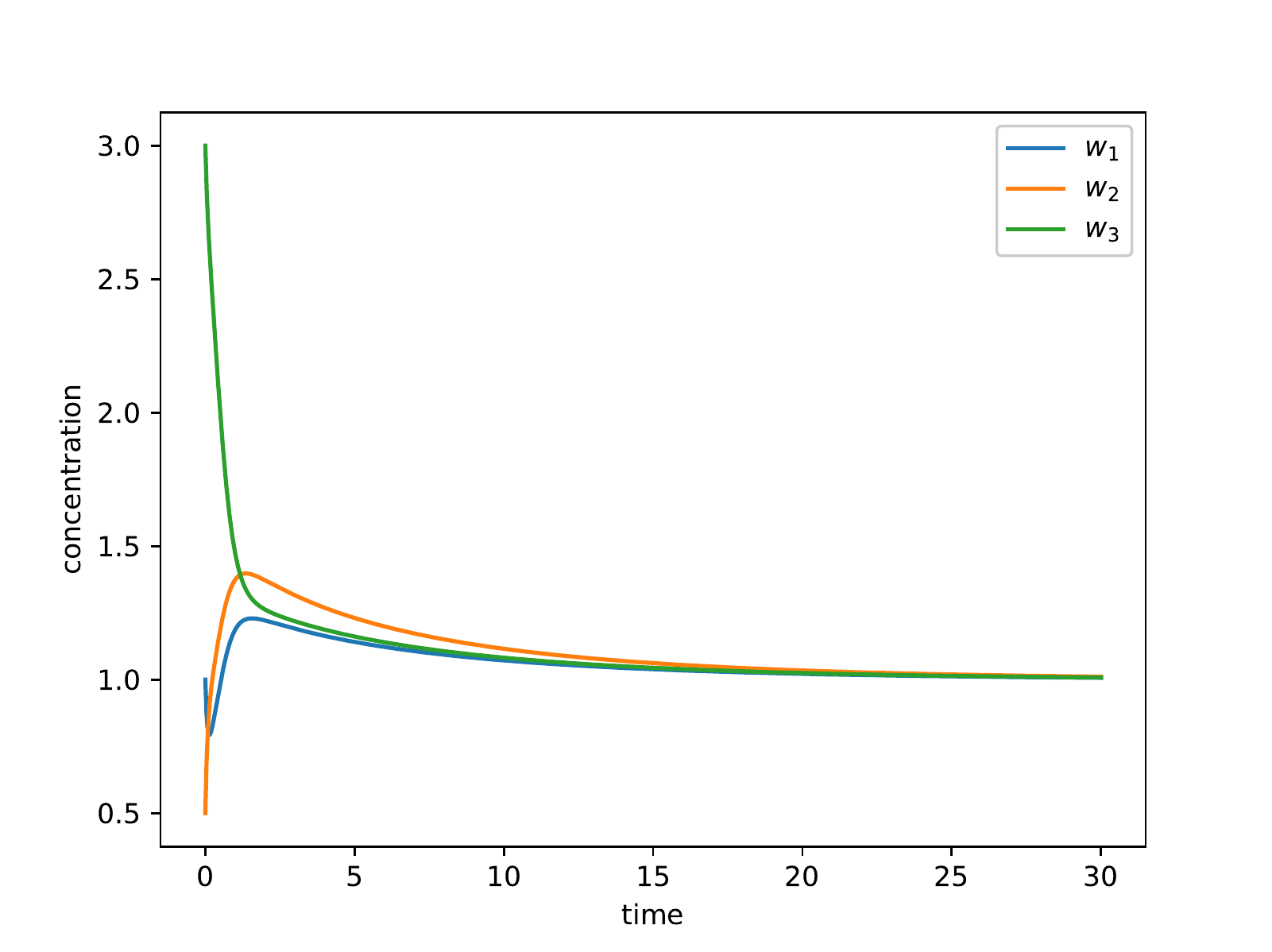}}\caption{Convergence of (a) gradient descend method and (b) numerical solution for chemical reaction that does not satisfy mass conservation law}
\label{chemical1}
\end{figure}
\noindent By applying gradient descend on the loss functions in (\ref{ls}), the concentrations of all chemical species in the corresponding chemical reaction converge to the equilibrium  point (Fig. \ref{chemical1}).
\section{Conclusion and Outlook}
\label{sec: summary}

In this paper, we develop the concept of generalized potential games enlarging significantly the class of potential games. We provide a necessary and sufficient condition for a game to be generalized potential using the connection of this concept to symmetrizable matrices. To show the applicability, we consider generalized potential games arising from chemical reaction network theory with detailed balance condition. Moreover, we use the projected steepest gradient descent method to calculate numerically the Nash equilibrium of generalized potential games and compare them with the evolution of the corresponding chemical reaction systems.

\medskip
The numerical comparison in this paper hints that there's a close connection between the trend to equilibrium of chemical reaction systems and the convergence of numerical methods for generalized potential games. More precisely, we believe that the convergence of a {\it discrete} version of this method (see e.g. \cite{Jungel2017}) leads to the convergence of algorithms finding equilibrium of the games defined in \eqref{eq: loss1}--\eqref{eq: lossn}. This interesting fact is contained in our forthcoming work.

\medskip
Another interesting direction of research is Hamiltonian games and decomposition of general games. Recently there has been considerable attempt in bringing concepts and methods from mechanics to game theory and machine learning \cite{balduzzi2018mechanics,Letcher2019,Bailey2019}. In particular, \cite{balduzzi2018mechanics,Letcher2019} have introduced a method to decompose a general multi-player game into a potential game and a Hamiltonian game based on the Helmholtz decomposition of the Hessian of the game. This resembles the decomposition of evolutionary dynamics into reversible and irreversible components in the celebrated GENERIC (General Equation for Non-Equilibrium Reversible and Irreversible Coupling) framework in non-equilibrium thermodynamics~\cite{Ottinger2005}. A great advantage of the GENERIC framework is that it not only automatically fulfills the laws of thermodynamics but also provides geometrical and physical structures. In this paper we have extended potential games to generalized potential games akin to the irreversible part of a GENERIC formulation. It would be interesting to further employ methodology and techniques from the GENERIC framework to understand and control
the dynamics in general games. We leave this topic for future research.

\appendix
\section{Explicit convergence to equilibrium}\label{appendix}
In this appendix, we show the explicit convergence to equilibrium for the single reversible reaction \eqref{reversible}. For simplicity, we assume without loss of generality that $k_f = k_r = 1$. Denote $a_i(t)$ and $b_j(t)$ as the concentrations of $A_i$ and $B_j$ respectively at time $t\geq 0$, for all $i=1, \ldots, m$ and $j=1,\ldots, n$. The general system \eqref{eq: CRE} in this case reads as
\begin{equation}\label{S}\tag{S}
\begin{cases}
a_{i}' = -\alpha_i(a^\alpha - b^\beta), &\text{ for } i=1,\ldots, m,\\
b_{j}' = \beta_j(a^\alpha - b^\beta),  &\text{ for } j=1,\ldots, n,\\
a_i(0) = a_{i0} > 0, &\text{ for } i=1,\ldots, m,\\
b_j(0) = b_{j0} > 0, &\text{ for } j=1,\ldots, n,
\end{cases}
\end{equation}
where we recall the convention \eqref{convention}. The solution to \eqref{S} obeys the following conservation laws
\begin{equation}\label{mass}
\frac{a_i(t)}{\alpha_i} + \frac{b_j(t)}{\beta_j} = M_{ij}:= \frac{a_{i0}}{\alpha_i} + \frac{b_{j0}}{\beta_j}, \quad \text{ for all } \quad i=1,\ldots, m, \; j=1,\ldots, n.
\end{equation}
The vector $\M = (M_{ij})_{i=1,\ldots, m,\, j=1,\ldots, n}$ is called the vector of initial masses. It is easy to see that there are precisely $m+n-1$ linearly independent conservations laws in \eqref{mass}. Thus, once we fix a suitable set of $m+n-1$ components of $\M$, then all other laws can be derived therefrom. For instance, $\M$ is completely defined if $M_{1j}$ and $M_{i1}$ are fixed for all $i=1,\ldots, m$ and $j=1,\ldots, n$. The following Lemma follows from Proposition \ref{pro:equilibrium}.

\begin{lemma}\label{lem:equilibrium}
	For any fix positive initial mass vector $\M$, there exists a unique positive chemical equilibrium $(a_\infty, b_\infty)\in (0,\infty)^{m+n}$ satisfying
	\begin{equation*}
	\begin{cases}
	a_\infty^\alpha = b_\infty^\beta,\\
	\dfrac{a_{i\infty}}{\alpha_i} + \dfrac{b_{j\infty}}{\beta_j} = M_{ij}, \text{ for all } i=1,\ldots, m, \, j=1,\ldots, n.
	\end{cases}
	\end{equation*}
\end{lemma}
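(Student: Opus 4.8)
The statement is essentially a corollary of Proposition \ref{pro:equilibrium} applied to the single reaction \eqref{reversible}, so the plan is to verify the three hypotheses needed to invoke it: that the reaction is detailed balanced, that the equilibrium condition collapses to $a^\alpha=b^\beta$, and that the affine constraints listed in the statement cut out exactly one compatibility class. I would then also record a short self-contained argument via the reaction extent, which yields uniqueness directly and does not rely on the (contradiction-based) reasoning behind Proposition \ref{pro:equilibrium}.

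For the first route, evaluating $\Rb$ from \eqref{eq: CRE} for the single reaction (with $k_f=k_b=1$) gives $\Rb=(a^\alpha-b^\beta)(\pmb\alpha-\pmb\beta)$ in the full concentration space $(a_1,\dots,a_m,b_1,\dots,b_n)$, where $\pmb\alpha=(\alpha_1,\dots,\alpha_m,0,\dots,0)$ and $\pmb\beta=(0,\dots,0,\beta_1,\dots,\beta_n)$. Since $\alpha_i,\beta_j\ge 1$ we have $\pmb\alpha-\pmb\beta\neq 0$, so $\Rb=0$ is equivalent to $a^\alpha=b^\beta$; thus a chemical equilibrium here is the same as a detailed balanced one. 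The all-ones state $a_i=b_j=1$ satisfies $a^\alpha=1=b^\beta$, so the network possesses a detailed balanced equilibrium and is therefore detailed balanced by the first part of Proposition \ref{pro:equilibrium}. Finally, the Wegscheider matrix is the single column $W=\pmb\beta-\pmb\alpha$, so $\mathrm{range}(W)$ is one-dimensional and the compatibility class is $(m+n-1)$-dimensional; differentiating $a_i/\alpha_i+b_j/\beta_j$ along \eqref{S} gives $-(a^\alpha-b^\beta)+(a^\alpha-b^\beta)=0$, confirming that the laws \eqref{mass} are conserved and (as already noted in the text) that $m+n-1$ of them are independent, hence they describe precisely the compatibility class through $\M$. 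The uniqueness-in-each-compatibility-class clause of Proposition \ref{pro:equilibrium} then finishes the proof.

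For a direct proof, I would parametrize the compatibility class by the reaction extent $\xi$ via $a_i(\xi)=a_{i0}-\alpha_i\xi$ and $b_j(\xi)=b_{j0}+\beta_j\xi$; a one-line check shows these satisfy the conservation laws \eqref{mass} for every $\xi$, and positivity of all coordinates holds exactly on the open interval $\xi\in(\xi_-,\xi_+)$ with $\xi_+=\min_i a_{i0}/\alpha_i$ and $\xi_-=-\min_j b_{j0}/\beta_j$, which is nonempty since it contains $\xi=0$ (here positivity of the initial data is used). Setting $h(\xi)=\sum_i\alpha_i\log a_i(\xi)-\sum_j\beta_j\log b_j(\xi)$, one computes $h'(\xi)=-\sum_i \alpha_i^2/a_i(\xi)-\sum_j\beta_j^2/b_j(\xi)<0$, so $h$ is strictly decreasing; moreover $h(\xi)\to-\infty$ as $\xi\to\xi_+^-$ (some $a_i\to0^+$) and $h(\xi)\to+\infty$ as $\xi\to\xi_-^+$ (some $b_j\to0^+$). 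By the intermediate value theorem there is a unique $\xi^*\in(\xi_-,\xi_+)$ with $h(\xi^*)=0$, i.e. $a(\xi^*)^\alpha=b(\xi^*)^\beta$, which gives the unique positive equilibrium. The only genuine point requiring care, in either approach, is ensuring the equilibrium is interior rather than a boundary equilibrium; in the direct argument this is automatic because $h$ blows up at both endpoints, forcing $\xi^*$ into the open interval.
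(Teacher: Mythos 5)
Your proposal is correct. The paper itself offers no written proof: it simply states that the lemma ``follows from Proposition \ref{pro:equilibrium}'', so your first route is exactly the paper's argument, merely spelled out (checking that the all-ones state is a detailed balanced equilibrium, that $\Rb=0$ collapses to $a^\alpha=b^\beta$ because $\pmb\alpha-\pmb\beta\neq 0$, and that the constraints \eqref{mass} cut out the compatibility class). Your second, direct argument via the reaction extent is a genuinely different route and, for this single-reaction setting, arguably the better one: it is elementary and constructive, exhibiting the equilibrium as the unique zero of the strictly decreasing function $h(\xi)=\log\bigl(a(\xi)^\alpha/b(\xi)^\beta\bigr)$ on the explicit interval $(\xi_-,\xi_+)$, and it rules out boundary equilibria for free since $h$ blows up at both endpoints --- whereas Proposition \ref{pro:equilibrium} imports the general (non-constructive) machinery of chemical reaction network theory. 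One small step you should make explicit in the direct argument: uniqueness as stated in the lemma is over all $(a_\infty,b_\infty)\in(0,\infty)^{m+n}$ satisfying the $mn$ affine constraints, so you need the converse of your ``one-line check'', namely that any such point lies on the line $\xi\mapsto(a_{i0}-\alpha_i\xi,\,b_{j0}+\beta_j\xi)$. This is immediate (set $\xi=(a_{10}-a_1)/\alpha_1$ and back-substitute into $a_i/\alpha_i+b_1/\beta_1=M_{i1}$ and $a_1/\alpha_1+b_j/\beta_j=M_{1j}$), but it is the step that converts ``unique on the parametrized line'' into the uniqueness claimed in the lemma.
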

To study the convergence to equilibrium for \eqref{S} we consider the entropy
\begin{equation}\label{entropy}
E[a,b] = \sum_{i=1}^{m}\varphi(a_i) + \sum_{j=1}^n\varphi(b_j)
\end{equation}
where $\varphi(z) = z\log z - z + 1$, the relative entropy
\begin{equation}\label{reentropy}
E[a,b|a_\infty,b_\infty] = E[a,b] - E[a_\infty,b_\infty] = \sum_{i=1}^m a_i\log\frac{a_i}{a_{i\infty}} - a_i + a_{i\infty} + \sum_{j=1}^nb_j\log\frac{b_j}{b_{j\infty}} - b_j + b_{j\infty},
\end{equation}
and its corresponding entropy dissipation
\begin{equation}\label{dissipation}
D[a,b] = -\frac{d}{dt}E[a,b|a_\infty,b_\infty] = (a^\alpha - b^\beta)\log\frac{a^\alpha}{b^\beta} \geq 0.
\end{equation}

\begin{proposition}\label{conv}
	Fix a positive initial mass vector $\M$ and let $(a_\infty,b_\infty)$ be the positive equilibrium defined by $\M$ in Lemma \ref{lem:equilibrium}. Then for any solution to \eqref{S} with initial mass $\M$ we have
	\begin{equation*}
	\lim_{t\to\infty}(a(t),b(t)) = (a_\infty,b_\infty).
	\end{equation*}
\end{proposition}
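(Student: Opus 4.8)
The plan is to use the relative entropy \eqref{reentropy} as a strict Lyapunov functional and to conclude via LaSalle's invariance principle. First I would note that the conservation laws \eqref{mass} trap every solution of \eqref{S} in the set
\[
\mathcal{K}_\M := \Big\{(a,b)\in\R_{\geq 0}^{m+n} : \tfrac{a_i}{\alpha_i}+\tfrac{b_j}{\beta_j}=M_{ij}\ \text{for all } i,j\Big\},
\]
which is compact (each $a_i\leq\alpha_i M_{ij}$ and $b_j\leq\beta_j M_{ij}$) and forward-invariant. On $\mathcal{K}_\M$ the functional $V:=E[\,\cdot\,|a_\infty,b_\infty]$ extends continuously to the boundary because $z\log z\to 0$ as $z\to 0^+$; moreover each summand of \eqref{reentropy} is of the form $x\log(x/y)-x+y\geq 0$, so $V\geq 0$ with equality exactly at $(a_\infty,b_\infty)$.

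Next I would invoke the dissipation identity \eqref{dissipation}, $\frac{d}{dt}V=-D[a,b]$ with $D[a,b]=(a^\alpha-b^\beta)\log\frac{a^\alpha}{b^\beta}\geq 0$, so that $V$ is nonincreasing along trajectories and $D$ vanishes precisely on $\{a^\alpha=b^\beta\}$. LaSalle's principle then localizes the $\omega$-limit set of any trajectory inside the largest invariant subset of $\{(a,b)\in\mathcal{K}_\M : a^\alpha=b^\beta\}$.

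The hard part will be to show that this set is the single point $(a_\infty,b_\infty)$, i.e. that \emph{no boundary equilibria} occur. Here I would use the standing hypothesis $\alpha_i,\beta_j\in[1,\infty)$ together with positivity of the masses: if $a_i=0$ for some $i$ then $a^\alpha=0$, forcing $b^\beta=0$ and hence $b_j=0$ for some $j$, whereas \eqref{mass} gives $b_j=\beta_j M_{ij}>0$ --- a contradiction (and symmetrically if some $b_j=0$). Thus $\{a^\alpha=b^\beta\}\cap\mathcal{K}_\M$ lies in the open orthant, where by Lemma \ref{lem:equilibrium} the only state with $a^\alpha=b^\beta$ obeying the conservation laws is $(a_\infty,b_\infty)$.

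Since this equilibrium point is itself invariant, LaSalle forces the $\omega$-limit set to equal $\{(a_\infty,b_\infty)\}$, and compactness of the orbit in $\mathcal{K}_\M$ then yields $\lim_{t\to\infty}(a(t),b(t))=(a_\infty,b_\infty)$, as claimed. An equivalent, LaSalle-free route would integrate \eqref{dissipation} to obtain $\int_0^\infty D\,dt\leq V(a_{0},b_{0})<\infty$ and combine $D\to 0$ (via Barbalat's lemma, using boundedness of the time derivatives on $\mathcal{K}_\M$) with the uniqueness in Lemma \ref{lem:equilibrium}; the same no-boundary-equilibria observation is what controls $D$ near $\partial\R_{\geq 0}^{m+n}$ and makes that argument go through.
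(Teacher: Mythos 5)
Your proposal is correct and follows essentially the same route as the paper: the relative entropy $E[a,b|a_\infty,b_\infty]$ as a Lyapunov functional, the nonnegative dissipation $D[a,b]$, and the fact that $D=0$ together with the conservation laws \eqref{mass} singles out $(a_\infty,b_\infty)$, concluded via Lyapunov/LaSalle theory. You simply make explicit several points the paper leaves implicit (compactness and invariance of the compatibility class, continuity of $E$ up to the boundary, and the verification that no boundary state can satisfy $a^\alpha=b^\beta$), which is a welcome sharpening rather than a different argument.
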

\begin{proof}
	Clearly $E[a,b|a_\infty,b_\infty]$ is a Lyapunov functional for \eqref{S} with the property that $E[a,b|a_\infty,b_\infty] = 0$ if and only if $(a,b) = (a_\infty,b_\infty)$ and
	\begin{equation*}
	\lim_{(a,b)\to \infty}E[a,b|a_\infty,b_\infty] = 0.
	\end{equation*}
	On the other hand, it's easy to check that $D[a,b] \geq 0$ for all $(a,b)$ and 
	\begin{equation*}
	D[a,b] = 0 \; \text{ and } \; (a,b) \text{ satisfies the mass conservation laws } \eqref{mass} \quad \Longleftrightarrow \quad (a,b) = (a_\infty,b_\infty).
	\end{equation*}
	Therefore, by Lyapunov stability theory we have the convergence
	\begin{equation*}
	\lim_{t\to\infty}(a(t),b(t)) = (a_\infty,b_\infty).
	\end{equation*}
\end{proof}

\begin{lemma}\label{lem:entropy}
	There exists an {\normalfont explicit} constant $C_1>0$ such that
	\begin{equation*}
	E[a,b|a_\infty,b_\infty] \geq C_1\left(\sum_{i=1}^m|a_i - a_{i\infty}|^2 + \sum_{j=1}^n|b_j - b_{j\infty}|^2 \right).
	\end{equation*}
\end{lemma}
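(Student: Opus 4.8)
The plan is to reduce the claimed functional inequality to a one–dimensional pointwise estimate applied term by term, the key subtlety being that no bound of the form $\varphi(u)\ge c\,(u-1)^2$ can hold for \emph{all} $u>0$: since $\varphi(u)=u\log u-u+1$ grows only like $u\log u$ at infinity, it cannot dominate a genuine quadratic for large arguments. The estimate must therefore be confined to a bounded range of concentrations, and the uniform $L^\infty$ bound furnished by the conservation laws \eqref{mass} is precisely what makes this localization legitimate.

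First I would extract an explicit a priori bound. Since every concentration is positive, the conservation laws \eqref{mass} give $a_i(t)/\alpha_i\le M_{ij}$ and $b_j(t)/\beta_j\le M_{ij}$, whence $a_i(t)\le \alpha_i M_{ij}$ and $b_j(t)\le\beta_j M_{ij}$ for all $t\ge 0$. Setting
\[
L:=\max_{i,j}\{\alpha_i M_{ij},\ \beta_j M_{ij}\},
\]
which depends only on the initial mass vector $\M$ and the stoichiometric coefficients, we obtain $0<a_i,b_j\le L$. Because the equilibrium $(a_\infty,b_\infty)$ satisfies the same conservation laws by Lemma \ref{lem:equilibrium}, we also have $0<a_{i\infty},b_{j\infty}\le L$, so both the state and the equilibrium lie in the compact range $[0,L]$.

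Next I would invoke the elementary inequality
\[
x\log\frac{x}{y}-x+y\ \ge\ \big(\sqrt{x}-\sqrt{y}\big)^2\qquad\text{for all }x,y>0,
\]
which follows from $\varphi(u)\ge(\sqrt{u}-1)^2$ after the substitution $u=x/y$ and multiplication by $y$; the latter is verified by a short one–variable analysis (both sides and their first derivatives vanish at $u=1$, and the difference is nonnegative on $[0,\infty)$). Writing $(\sqrt{x}-\sqrt{y})^2=(x-y)^2/(\sqrt{x}+\sqrt{y})^2$ and using $\sqrt{x}+\sqrt{y}\le 2\sqrt{L}$ on the range established above, each summand of the relative entropy \eqref{reentropy} is controlled from below:
\[
a_i\log\frac{a_i}{a_{i\infty}}-a_i+a_{i\infty}\ \ge\ \frac{|a_i-a_{i\infty}|^2}{4L},
\]
and analogously for the $b_j$ terms. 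Summing over $i=1,\ldots,m$ and $j=1,\ldots,n$ yields the assertion with the explicit constant $C_1=\tfrac{1}{4L}$.

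The only genuine obstacle is the global failure of a quadratic lower bound for $\varphi$; once the conservation laws confine all concentrations to $[0,L]$, the remainder is a routine calculus estimate. Crucially, the argument keeps the constant $C_1$ explicit in the initial masses through $L$, which is exactly what is needed so that the Bakry–Emery scheme underlying Proposition \ref{pro:explicit} delivers computable rates rather than merely qualitative decay.
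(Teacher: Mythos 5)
Your proof is correct and follows essentially the same route as the paper: both rewrite each summand of the relative entropy via $\varphi(z)=z\log z-z+1\ge(\sqrt z-1)^2=(z-1)^2/(\sqrt z+1)^2$ and then use the $L^\infty$ bound on the concentrations coming from the conservation laws \eqref{mass} to turn the square-root quadratic into a genuine quadratic with an explicit constant. The only cosmetic difference is the form of the constant (the paper keeps the equilibrium values $a_{i\infty},b_{j\infty}$ in its $M$, whereas you absorb everything into a single bound $L$ giving $C_1=1/(4L)$), which does not affect explicitness.
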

\begin{proof}
	By using the elementary inequality $\varphi(z) = z\log z - z + 1 \geq (\sqrt{z} - 1)^2 = \frac{(z-1)^2}{(\sqrt{z}+1)^2}$ we have
	\begin{equation*}
	\begin{aligned}
	E[a,b|a_\infty,b_\infty] &= \sum_{i=1}^ma_{i\infty}\varphi\left(\frac{a_i}{a_{i\infty}}\right) + \sum_{j=1}^nb_{j\infty}\varphi\left(\frac{b_j}{b_{j\infty}} \right)\\
%	&\geq\sum_{i=1}^m(\sqrt{a_i} - \sqrt{a_{i\infty}})^2 + \sum_{j=1}^n(\sqrt{b_j} - \sqrt{b_{j\infty}})^2\\
	&\geq \sum_{i=1}^ma_{i\infty}\frac{(a_i-a_{i\infty})^2}{(\sqrt{a_i} + \sqrt{a_{i\infty}})^2} + \sum_{j=1}^nb_{j\infty}\frac{(b_j - b_{j\infty})^2}{(\sqrt{b_j} + \sqrt{b_{j\infty}})^2}\\
	&\geq \frac{1}{M}\left(\sum_{i=1}^m|a_i-a_{i\infty}|^2 + \sum_{j=1}^n|b_j - b_{j\infty}|^2 \right)
	\end{aligned}
	\end{equation*}	
	where we used the estimate
	\begin{equation*}
	a_i \leq \alpha_i M_{i1} \quad \text{ and } \quad b_j \leq \beta_j M_{1j} \quad \text{ for all } i=1,\ldots, m,\; j=1,\ldots, n,
	\end{equation*}
	and the constant $M>0$ is defined as
	\begin{equation*}
	M = \max_{i=1,\ldots, m,\; j=1\ldots, n}\left\{\frac{1}{a_{i\infty}}(\sqrt{\alpha_iM_{i1}} + \sqrt{a_{i\infty}})^2; \; \frac{1}{b_{j\infty}}(\sqrt{\beta_jM_{1j}} + \sqrt{b_{j\infty}})^2\right\}.
	\end{equation*}
\end{proof}
%\begin{theorem}\label{exp_conv}
%	The convergence in Proposition \ref{conv} is in fact exponential, i.e.
%	\begin{equation*}
%	\sum_{i=1}^m|a_i(t) - a_{i\infty}|^2 + \sum_{j=1}^n|b_j(t) - b_{j\infty}|^2 \leq C_0e^{-\lambda t}
%	\end{equation*}
%	in which the constant $C_0>0$ and rate $\lambda >0$ are {\normalfont explicit.}
%\end{theorem}
\begin{proof}[Proof of Proposition \ref{pro:explicit}]
	We will employ the Bakry-Emery strategy. The proof is divided into several steps.
	
\noindent \textbf{Step 1.} Computing the derivative of the entropy dissipation \eqref{dissipation} we have
		\begin{equation*}
		\begin{aligned}
		\frac{d}{dt}D[a,b] &= \left[ \frac{d}{dt}a^\alpha - \frac{d}{dt}b^\beta\right]\log\frac{a^\alpha}{b^\beta} + (a^\alpha - b^\beta)\left[\sum_{i=1}^m\frac{a_i'}{a_i} - \sum_{j=1}^n\frac{b_j'}{b_j}\right]\\
		&= -\left[a^\alpha\sum_{i=1}^m\frac{\alpha_i^2}{a_i} + b^\beta\sum_{j=1}^n\frac{\beta_j^2}{b_j}\right](a^\alpha - b^\beta)\log\frac{a^\alpha}{b^\beta} - (a^\alpha - b^\beta)^2\left(\sum_{i=1}^m\frac{\alpha_i}{a_i} + \sum_{j=1}^n\frac{\beta_j}{b_j} \right)\\
		&\leq -\left[a^\alpha\sum_{i=1}^m\frac{\alpha_i^2}{a_i} + b^\beta\sum_{j=1}^n\frac{\beta_j^2}{b_j}\right]D[a,b].
		\end{aligned}
		\end{equation*}
		
\noindent \textbf{Step 2.} We prove that there exists an {\it explicit} constant $\lambda> 0$ such that
		\begin{equation}\label{lam}
		\Lambda(a,b):= a^\alpha\sum_{i=1}^m\frac{\alpha_i^2}{a_i} + b^\beta\sum_{j=1}^n\frac{\beta_j^2}{b_j} \geq \lambda.
		\end{equation}
		We will utilize the conservation laws and distinguish two cases:
		\begin{itemize}
			\item[(i)] if for some $i_0\in\{1,\ldots, m\}$, $a_{i_0} \leq \frac 12 \alpha_{i_0} \min_{j=1,\ldots, n}M_{{i_0}j}$, then for all $j=1,\ldots, n$,
			\begin{equation*}
			b_j = \beta_j\left(M_{{i_0}j} - \frac{a_{i_0}}{\alpha_{i_0}}\right)  \geq \frac 12\beta_{j}\min_{k=1,\ldots, n}M_{{i_0}k} =: K_j.
			\end{equation*}
			By now using $b_j \leq \beta_j \min_{i=1,\ldots, m}\{M_{ij} \}$ we have
			\begin{equation}\label{lam1}
			\Lambda(a,b) \geq b^\beta \sum_{j=1}^m\frac{\beta_j^2}{b_j} \geq \prod_{j=1}^mK_j^{\beta_j} \sum_{j=1}^m\frac{\beta_j}{\min_{i=1,\ldots, m}M_{ij}} > 0.
			\end{equation}
			\item[(ii)] if $a_i \geq \frac 12 \alpha_i \min_{j=1,\ldots, n}M_{ij} =: L_i$ for all $i=1,\ldots, m$ then we obtain
			\begin{equation}\label{lam2}
			\Lambda(a,b) \geq a^\alpha \sum_{i=1}^m\frac{\alpha_i^2}{a_i} \geq \prod_{i=1}^mL_i^{\alpha_i}\sum_{i=1}^{m}\frac{\alpha_i^2}{L_i} > 0.
			\end{equation}
			From \eqref{lam1} and \eqref{lam2} we get \eqref{lam} with
			\begin{equation}\label{lam_def}
			\lambda =  \min\left\{\prod_{j=1}^mK_j^{\beta_j} \sum_{j=1}^m\frac{\beta_j}{\min_{i=1,\ldots, m}M_{ij}}; \; \prod_{i=1}^mL_i^{\alpha_i}\sum_{i=1}^{m}\frac{\alpha_i^2}{L_i}\right\}.
			\end{equation}
		\end{itemize}
		
\noindent \textbf{Step 3.} It follows from the previous steps that
		\begin{equation}\label{diss_decay}
		\frac{d}{dt}D[a,b] \leq - \lambda D[a,b].
		\end{equation}
		Gronwall's inequality gives
		\begin{equation*}
		D[u,v](t) \leq e^{-\lambda t}D[a_0,b_0] \quad \text{ and  consequently } \quad \lim_{t\to\infty}D[u,v](t) = 0.
		\end{equation*}
		Using $D[a,b] = -\frac{d}{dt}E[a,b|a_\infty,b_\infty]$, integrating \eqref{diss_decay} on $(t,+\infty)$ and using Proposition \ref{conv}, leads to
		\begin{equation*}
		-D[a,b](t) \leq - \lambda E[a,b|a_\infty,b_\infty](t),s
		\end{equation*}
		or equivalently
		\begin{equation*}
		\frac{d}{dt}E[a,b|a_\infty,b_\infty] \leq -\lambda E[a,b|a_\infty,b_\infty].
		\end{equation*}
		Applying Gronwall's lemma and Lemma \ref{lem:entropy} yields
		\begin{equation*}
		\sum_{i=1}^m|a_i(t) - a_{i\infty}|^2 + \sum_{j=1}^n|b_j(t) - b_{j\infty}|^2 \leq C_1^{-1}E[a_0,b_0|a_\infty,b_\infty]e^{-\lambda t}		
		\end{equation*}
		for all $t>0$.
\end{proof}

\section{Projected steepest gradient descent algorithm}
Consider the constrained optimization of the form:
\begin{align*}
\min_{{\bf x}\in\mathbb{R}^{n}}&f({\bf x})\\
\text{s.t. } & A{\bf x}={\bf b}
\end{align*}
The projected steepest gradient descent algorithm to solve this problem is described in Algorithm \ref{alg:projected_gradient} \cite{freund2004projection}.
\begin{algorithm}
    \caption{Projected steepest gradient descent algorithm}
    \label{alg:projected_gradient}
    \begin{algorithmic}[1]
     \item Step 0: Initialize ${\bf x}$

	\item Step 1: Compute $\nabla f({\bf x})$.

	\item Step 2: Find moving direction
	\begin{itemize}
	\item Compute $P=I-A^{T}(AA^{T})^{-1}A$
	\item Compute $\beta=(\nabla f({\bf x})^{T})P\nabla f({\bf x})$
	\item If $\beta>0$, ${\bf d}=\frac{-P(\nabla f({\bf x}))}{\sqrt{\beta}}$ else if $\beta=0$, ${\bf d}=0$.
\item If $\nabla f({\bf x})^{T}{\bf d}=0$ stop.
\end{itemize}
\item Step 3: Update ${\bf x}={\bf x}+\alpha{\bf d}$. Go to step 1.
     \end{algorithmic}
\end{algorithm}

\medskip
{\bf Acknowledgement.} 
We thank Prof. Jeff Morgan for his careful proofreading of this manuscript.

This paper was completed during the visit of the third author at Hausdorff Research Institute for Mathematics in the Junior Trimester "Kinetic Theory". The institute's hospitality is greatly acknowledged. This work is partially supported by NAWI Graz and International Research Training Group IGDK 1754.

\bibliographystyle{alpha}
\bibliography{reference}
\end{document}